\newtheorem{lemma}{Lemma}
\newtheorem{theorem}{Theorem}
\newcommand{\eref}[1]{(\ref{#1})}
\newcommand{\sref}[1]{Section~\ref{#1}}
\newcommand{\cref}[1]{Constraint~\ref{#1}}
\newcommand{\thref}[1]{Theorem~\ref{#1}}
\newcommand{\lref}[1]{Lemma~\ref{#1}}
\newcommand{\ignore}[1]{}
\begin{document}
	
	\title{
	Decentralized Model Dissemination Empowered Federated Learning in mmWave Aerial-Terrestrial Integrated Networks}
	
	\setlength{\columnsep}{0.21 in}

	\author{
		\IEEEauthorblockN{Mohammed S. Al-Abiad, \textit{Member, IEEE}, Md. Zoheb Hassan, and Md. Jahangir Hossain, \textit{Senior Member, IEEE}} 
		
	\thanks {
		Mohammed S. Al-Abiad is with the Department of Electrical and Computer Engineering, University of Toronto, Toronto, ON M5S, Canada (e-mail: mohammed.saif@utoronto.ca).
  
  Md. Zoheb Hassan is with Wireless@Virginia Tech, Bradley Department of ECE,  Virginia Tech, VA, USA (e-mail:  mdzoheb@vt.edu).
  
  Md. Jahangir Hossain is with the School of
		Engineering, University of British Columbia, Kelowna, BC V1V 1V7, Canada
		(e-mail: jahangir.hossain@ubc.ca). 
	}
		\vspace{-0.99cm}
	}
	
	\maketitle
\vspace{-0.99cm}
	\begin{abstract}
	
It is anticipated that aerial-terrestrial integrated networks incorporating unmanned aerial vehicles (UAVs) mounted relays will offer improved coverage and connectivity in the beyond 5G era. Meanwhile, federated learning (FL) is a promising distributed machine learning technique for building inference models over wireless networks due to its ability to maintain user privacy and reduce communication overhead. However, off-the-shelf FL models aggregate global parameters at a central parameter server (CPS), increasing energy consumption and latency, as well as inefficiently utilizing radio resource blocks (RRBs) for distributed user devices (UDs). This paper presents a resource-efficient FL framework, called FedMoD (\textbf{fed}erated learning with \textbf{mo}del \textbf{d}issemination), for millimeter-wave (mmWave) aerial-terrestrial integrated networks with the following two unique characteristics. Firstly, FedMoD  presents a novel decentralized model dissemination algorithm that makes use of UAVs as local model aggregators through UAV-to-UAV and device-to-device (D2D) communications. As a result, FedMoD (i) increases the number of participant UDs in developing FL model and (ii) achieves global model aggregation without involving CPS. Secondly, FedMoD reduces the energy consumption of FL using radio resource management (RRM) under the constraints of over-the-air learning latency. In order to achieve this, by leveraging graph theory, FedMoD optimizes the scheduling of line-of-sight (LOS) UDs to suitable UAVs/RRBs over mmWave links and  non-LOS UDs to available LOS UDs via overlay D2D communications. Extensive simulations reveal that decentralized FedMoD offers same convergence rate performance as compared to  conventional FL frameworks.

\end{abstract}

\begin{IEEEkeywords}
Decentralized FL model dissemination, energy consumption, UAV communications. 
\end{IEEEkeywords}

\vspace{-0.44cm}

\section{Introduction} 
Unmanned aerial vehicles (UAVs) are expected to have a significant impact on the economy by 2026 with a projected global market value of US\$59.2 billion, making the inclusion of UAVs critical in beyond 5G cellular networks \cite{UAV_economy}. There are several unique features of UAV-mounted communication platforms, including the high likelihood of establishing line-of-sight connections with ground nodes, rapid deployment, and adjustable mobility \cite{Zoheb_IoD}. With such attributes, UAVs can serve as aerial base stations (BSs) or relays in conjunction with terrestrial base stations, resulting in aerial-terrestrial integrated networks (ATINs). By connecting cell-edge user devices (UDs) to terrestrial cellular networks via aerial BSs or relays, ATINs improve coverage and connectivity significantly \cite{Rui_Zhang_UAV}. 
The 3GPP standard also incorporates the use of UAVs as a communication infrastructure to complement terrestrial cellular networks \cite{3GPP}. During current 5G deployment efforts, it has been shown that the millimeter-wave band at 28 GHz is significantly larger and more capable than the sub-6 GHz band. At the same time, air-to-ground communications have the advantage of avoiding blockages and maintaining LOS connectivity as a result of UAV's high altitude and flexibility. \cite{UAV_mmWave}. Therefore, the mmWave band is suitable for deploying high-capacity ATINs in the next-generation cellular networks.

A data-driven decision making process enables wireless networks to manage radio resources more efficiently by predicting and analyzing several dynamic factors, such as users' behavior, mobility patterns, traffic congestion, and quality-of-service expectations. Data-driven radio resource management (RRM) has gained increasing popularity, thanks to the expansion of wireless sensing applications, the availability of enormous data, and the increasing computing capabilities of devices. To train machine learning (ML) models, raw data collected from individual UDs is aggregated in a central parameter server (CPS). As a result, such centralized ML approaches require enormous amounts of network resources to collect raw data from UDs. In addition, centralized ML also impairs users' privacy since CPS can easily extract sensitive information from raw data gathered from UDs.  Recently, Google proposed federated learning (FL) for UDs to collaboratively  learn a model without sharing their private data \cite{FL_Google}. In FL, UDs update parameters according to their local datasets, and only the most recent parameters are shared with the CPS. Using local models from all participating UDs, the CPS updates global model parameters and shares them with the UDs. The local and global models are adjusted iteratively until convergence. Unlike centralized ML approaches, FL not only protects UD privacy but also improves wireless resource utilization significantly. Nevertheless, the convergence performance of FL in wireless networks significantly depends on the appropriate selection of the participating UDs, based on both channel and data quality, and bandwidth allocation among the selected UDs \cite{FL_Reed}.

The FL framework provides a powerful computation tool for ATINs to make decentralized decisions \cite{UAV_FL_1}. UAVs are frequently used as aerial sensors or aerial data collectors in several practical scenarios, and the convergence and accuracy of FL in such use cases can be improved by appropriately exploiting the  unique attributes of air-to-ground communication links. For instance, a FL framework was developed for hazardous zone detection and air quality prediction by utilizing UAVs as local learners and deploying a swarm of them to collect local air quality index (AQI) data \cite{UAV_FL_2}.  A UAV-supported FL framework was also proposed in which a drone visited learning sites sequentially, aggregated model parameters locally, and relayed them to the CPS for global aggregation \cite{UAV_FL_2}. Meanwhile, ATINs can also deploy UAVs as aerial BSs with edge computing capabilities. In this context, UAV can provide global model aggregation capability for a large numbers of ground UDs, thanks to its large coverage and high probability of establishing LOS communications \cite{UAV_FL_3}. In the aforesaid works, all the local model parameters were aggregated into a single CPS using the conventional star-based FL framework. Although such a star-based FL is convenient, it poses several challenges in the context of ATINs. Firstly, a star-based FL requires a longer convergence time due to the presence of straggling local learners. Recall, the duration of transmission and hovering of a UAV influences its energy consumption, and consequently, increasing the convergence time of FL directly increases the energy consumption of UAVs. This presents a significant challenge for implementing FL in ATINs since UAVs usually have limited battery capacity.  In addition, as a result of increased distance and other channel impairments, a number of local learners with excellent datasets may be out of coverage of the central server in practice. The overall learning accuracy of FL models can be severely impacted if these local learners are excluded from the model. The use of star-based FL frameworks in ATINs is also confronted by the uncertainty of air-to-ground communication links resulting from random blocking and the mobility of UAVs. This work seeks to address these challenges by proposing a resource-efficient FL framework for mmWave ATINs that incorporates decentralized model dissemination and energy-efficient UD scheduling.
\vspace{-0.55cm}
 \subsection{Summary of the Related Works}
 \vspace{-0.25cm}
In the current literature, communication-efficient FL design problems are explored. In \cite{FL_Lit_1_1}, the authors suggested a stochastic alternating direction multiplier method to update the local model parameters while reducing communications between local learners and CPS. 
In \cite{FL_Lit_1_3}, a joint client scheduling and RRB allocation scheme was developed to minimize accuracy loss. To minimize the loss function of FL training, UD selection, RRB scheduling, and transmit power allocation were optimized simultaneously \cite{FL_Lit_1_4}. Numbers of global iterations and duration of each global iteration were minimized by jointly optimizing the UD selection and RRB allocation \cite{FL_Lit_1_5}. Besides, since UDs participating in FL are energy-constrained, an increasing number of studies focused on designing energy-efficient FL frameworks. As demonstrated in \cite{FL_Lit_1_6}, the energy consumption of FL can be reduced by uploading only quantized or compressed model parameters from UDs to CPS. Furthermore, RRM enhances energy efficiency of FL in large-scale networks. Several aspects of RRM, such as client scheduling, RRB allocation, and transmit power control, were extensively studied to minimize both communication and computation energy of FL frameworks \cite{FL_Lit_1_7, FL_Lit_1_8}. An energy-efficient FL framework based on relay-assisted two-hop transmission and non-orthogonal multiple access scheme was recently proposed for both energy and resource constrained Internet of Things (IoT) networks \cite{FL_Lit_1_10}. In the aforesaid studies, conventional star-based FL frameworks were studied. Due to its requirement to aggregate all local model parameters on a single server, the  star-based FL is inefficient for energy- and resource-constrained wireless networks.
 
 Hierarchical FL (HFL) frameworks involve network edge devices uploading model parameters to mobile edge computing (MEC) servers for local aggregation, and MEC servers uploading aggregated local model parameters to CPS periodically. The HFL framework increases the number of connected UDs and reduces energy consumption \cite{HFL_1_1}. To facilitate the HFL framework, a client-edge-cloud collaboration framework was explored \cite{HFL_1_2}. HFL was investigated in heterogeneous wireless networks through the introduction of fog access points and multiple-layer model aggregation \cite{HFL_1_3}. Dynamic wireless channels in the UD-to-MEC and MEC-to-CPS hops and data distribution play a crucial role in FL learning accuracy and convergence. Thus, efficient RRM is imperative for implementation of HFL. As a result, existing literature evaluated several RRM tasks, including UD association, RRB allocation, and edge association, to reduce cost, latency, and learning error of HFL schemes \cite{HFL_1_4, HFL_1_5}.
  
 While HFL increases the number of participating UDs, its latency and energy consumption are still hindered by dual-hop communication for uploading and broadcasting local and global model parameters. Server-less FL is a promising alternative to reduce latency and energy consumption. This FL framework allows UDs to communicate locally aggregated models without involving central servers, thereby achieving model consensus. The authors in \cite{FL_Dec_1_1} proposed a FL scheme that relies on device-to-device (D2D) communications to achieve model consensus. However, due to the requirement of global model aggregation with two-time scale FL over both D2D and user-to-CPS wireless transmission, this FL scheme has limited latency improvement. In \cite{FL_Dec_1_2, FL_Dec_1_3}, the authors developed FL model dissemination schemes by leveraging connected edge servers (ESs), which aggregate local models from their UD clusters and exchange them with all the other ESs in the network for global aggregation. However, a fully connected ES network is prohibitively expensive in practice, especially when ESs are connected by wireless links. In addition, each global iteration of FL framework takes significantly longer because ESs continue to transmit local aggregated models until all other ESs receive them successfully \cite{FL_Dec_1_2, FL_Dec_1_3}. The authors in \cite{FL_Dec_1_4} addressed this issue by introducing conflicting UDs, which are the UDs covering multiple clusters, and allowing parameter exchanges between them and local model aggregators.
 
In spite of recent advances in resource-efficient, hierarchical, and decentralized FL frameworks, existing studies have several limitations in utilizing UAVs as local model aggregators in mmWave ATINs. In particular, state-of-the-art HFL schemes of \cite{HFL_1_2, HFL_1_3, HFL_1_4} can prohibitively increase the communication and propulsion energy consumption of UAVs because it involves two-hop communications and increased latency. Additionally, mmWave band requires LOS links between UDs and UAVs for local model aggregation, as well as LOS UAV-to-UAV links for model dissemination. Accordingly, the FL model dissemination frameworks proposed in \cite{FL_Dec_1_2, FL_Dec_1_3, FL_Dec_1_4} will not be applicable to mmWave ATINs. We emphasize that in order to maintain convergence speed and reduce energy consumption, the interaction among UD-to-UAV associations, RRB scheduling, and UAV-to-UAV link selection, in addition to the inherent properties of mmWave bands, must be appropriately characterized. Such a fact motivates us to develop computationally efficient models dissemination and RRM schemes for mmWave ATINs implementing decentralized FL.

\vspace{-0.44cm}
\subsection{Contributions}
This work proposes a resource-efficient and fast-convergent FL framework for mmWave ATINs, referred to as Federated Learning with Model Dissemination (FedMoD). 
The specific contributions of this work are summarized as follows.

\begin{itemize}
    \item A UAV-based distributed FL model aggregation method is proposed by leveraging UAV-to-UAV communications. Through the proposed method, each UAV is able to collect local model parameters only from the UDs in its coverage area and share those parameters over LOS mmWave links with its neighbor UAVs. The notion of physical layer network coding is primarily used for disseminating model parameters among UAVs. This allows each UAV to collect all of the model parameters as well as aggregate them globally without the involvement of the CPS. With the potential to place UAVs near cell edge UDs, the proposed UAV-based model parameter collection and aggregation significantly increases the number of participating UDs in the FL model construction process. Based on the channel capacity of the UAV-to-UAV links, a conflict graph is formed to facilitate distributed model dissemination among the UAVs and a maximal weighted independent search (MWIS) method is proposed to solve the conflict graph problem. In light of the derived solutions, a decentralized FedMoD is developed and its convergence is rigorously proved.
    \item Additionally, a novel RRM scheme is investigated to reduce the overall energy consumption of the developed decentralized FL framework under the constraint of learning latency. The proposed RRM optimizes both (i) the scheduling of LOS UDs to suitable UAVs and radio resource blocks (RRBs) over mmWave links and (ii) the scheduling of non-LOS UDs to LOS UDs over side-link D2D communications such that non-LOS can transmit their model parameters to UAVs with the help of available LOS UDs. As both scheduling problems are provably NP-hard, their optimal solutions require prohibitively complex computational resources. Two graph theory solutions are therefore proposed for the aforementioned scheduling problems to strike a suitable balance between optimality and computational complexity.
    
    \item To verify FedMoD's effectiveness over  contemporary star-based FL and  HFL schemes, extensive numerical simulations are conducted. Simulation results reveal that FedMoD achieves good convergence rates and superior energy consumption compared to the benchmark schemes.
    
\end{itemize}

The rest of this paper is organized as follows. In Section II, the system model described in detail. In Section,  the proposed FedMoD algorithm is explained thoroughly along with its convergence analysis. Section IV presents the RRM scheme for improving energy-efficiency of the proposed FedMoD framework. Section V presents various simulation results on the performance of the proposed FedMoD scheme. Finally, the concluding remarks are provided in Section VI.

\vspace{-0.4cm}
\section{System Model} \label{SMMM}

\subsection{System Overview}

The envisioned mmWave aerial-terrestrial integrated network (ATIN) model is illustrated in Fig. \ref{fig1}, that consists of  a single CPS, multiple UAVs that are connected to each other through mmWave air-to-air (A2A) links, and multiple UDs that are under the serving region of each UAV. The UDs are connected with the UAVs via mmWave. The set of all the considered UDs is denoted by $\mathcal U=\{1,2, \cdots, U\}$ and the set of UAVs is denoted by   $\mathcal K=\{1, 2,  \cdots, K\}$. The federated learning (FL) process is organized in
iterations, indexed by $\mathcal T=\{1, 2,   \cdots, T\}$. Similar to \cite{RRB1, AhmedH}, each UAV $k$ has a set of orthogonal RRBs that is denoted by $\mathcal B=\{1, 2,  \cdots, B\}$, and the UDs are scheduled to these RRBs to offload their local parameters to the UAVs. The
set of UDs in the serving region of the $k$-th UAV is denoted by $\mathcal U_k=\{1, 2, \cdots, U_k\}$. In addition, for the $u$-th UD, the available UAVs are denoted by a set $\mathcal K_u$.  Therefore, some UDs are able to access multiple UAVs simultaneously. We assume that  (i) the $u$-th UD is only associated to the $k$-th UAV during the $t$-th FL iteration and (ii) neighboring UAVs transmit FL models to the scheduled UDs over orthogonal RRBs. In this work, we offload the CPS for performing global aggregations. However, the CPS is required to coordinate the clustering optimization of UAVs and their associated UDs through reliable control channels.

Suppose that UAV $k$ flies and hovers  at a fixed flying altitude $H_k$, and it is assumed that all the UAVs have the same altitude. Let $\mathbf x_l=(x_k, y_k, H_k)$ is the 3D location of the $k$-th UAV
and $(x_u, y_u)$ is the 2D location of the $u$-th UD. In accordance with \cite{elev}, for the mmWave UD-UAV communications to be successful, one needs to ensure LOS connectivity between UAVs and UDs. However, some of the UDs may not have LOS communications to the UAVs, thus they can not transmit their trained local parameters directly to the UAVs. Let $\mathcal U_{los}$ be the set of UDs who have LOS links to the UAVs, and let $\mathcal U_{non}$ be the set of UDs who do not have LOS links to the UAVs. Given an access link between the $u$-th UD, i.e., $u\in \mathcal U_{los}$, and the $k$-th UAV, the path loss of the channel (in dB) between  the $u$-th UD and the $k$-th UAV is expressed as follows $
PL(u,k)=20\log_{10}(\frac{4 \pi f_c d_{u,k}}{c})$, where
$f_c$ is the carrier frequency, and $c$ is the light speed, and $d_{u,k}$ is the distance between the $u$-th UD and the $k$-th UAV \cite{elev}. The wireless channel gain between the $u$-th UD and the $k$-th UAV on the $b$-th RRB is $
h^u_{k,b}=10^{-PL(u,k)/10}$. Let $p$ be the transmission power of the UDs and maintains fixed  and $N_o$ as the AWGN noise power.  Therefore, the achievable capacity at which the $u$-th UD can transmit its local model parameter to the $k$-th UAV on the $b$-th RRB at the $t$-th global iteration is given by Shannon’s formula $
R^u_{k,b}=W\log_{2}(1+\frac{p |h^u_{k,b}|^2}{N_0}), \forall u\in \mathcal U_k, k \in \mathcal K_u$,
where $\mathcal U_k\subset \mathcal U_{los}$ and $W$ is the RRB's bandwidth. Note that the transmission rate between the $u$-th UD and the $k$-th UAV on the $b$-th RRB determines if the $u$-th UD is covered by the $k$-th corresponding UAV and has LOS to the $u$-th UD. In other words,  the $u$-th UD is within the coverage of the $k$-th corresponding UAV if $R^u_{k,b}$ meets the rate threshold $R_0$, i.e., $R^u_{k,b}\geq R_0$, and has LOS link to the $k$-th UAV. Each UAV $k$ aggregates the local models of its scheduled UDs only.

For disseminating the local aggregated models among the UAVs to reach global model consensus, UAVs can communicate  through A2A links. Thus, the A2A links between the UAVs are assumed to be in LOS condition \cite{17}. We also
assume that the UAVs employ directive beamforming to
improve the rate. As a result, the gain of the UAV antenna located at $x_k$, denoted by $G^A$, at the receiving UAV is given by \cite{18}
\begin{equation}
	\begin{split}
G^A(d_{A, x_k})=\begin{cases}
	& G^A_m,  ~\text{if} ~-\frac{\theta^a_b}{2} \leq \Phi \leq \frac{\theta^a_b}{2} \\
	& G^A_s, ~\text{otherwise},
	\end{cases}
	\end{split}
	\end{equation}
where $d_{A, x_k}$ is the distance between the typical receiving UAV and the $k$-th UAV at $x_k$, $G^A_m, G^A_s$ are the gains of the main-lobe and side-lobe, respectively, $\Phi$ is the sector angle, and  $\theta^a_b\in [0, 180]$ is the beamwidth in degrees \cite{19}. Accordingly, the received power at the typical receiving UAV from UAV $k$ at $x_k$ is given by
\begin{equation}
    P^A_{r,k}=PG^A(d_{A, x_k})\zeta_A H_A^{x_k}d^{-\alpha_A}_{A, x_k},
\end{equation}
where $\zeta_A$ represents the excess losses, $H_A^{x_k}$ is the Gamma-distributed channel power gain, i.e., $H_A^{x_k} \approx \Gamma(m_A, \frac{1}{m_A})$, with a fading parameter $m_A$, and $\alpha_A$ is the path-loss exponent. As a result, the SINR at the typical receiving UAV is given by 
\begin{equation}
    \gamma=\frac{\mu_A H_A^{x_k} d^{-\alpha_A}_{A, x_k}}{I+\sigma^2}, 
\end{equation}
where $\mu_A=P_A G^A_m\zeta_A$, $I$ is the interference power. Such interference can be expressed as follows
\begin{equation}
    I= \sum^{K}_{j=1, j\neq k} P G^A(d_{A,x_j})\zeta_AH^{x_j}_Ad^{-\alpha_A}_{A, x_j},
\end{equation}
where $G^A(d_{A,x_j})=G^A_m$ with a probability of $q_A$ and $G^A(d_{A,x_j})=G^A_s$ with a probability of $1-q_A$.

Once the local aggregated model dissemination among the UAVs is completed, the $k$-th UAV adopts a common transmission rate $R_k$ that is equal to the minimum achievable rates of all its scheduled UDs  $\mathcal U_{k}$. This adopted transmission rate is $R_k=\min_{u\in \mathcal U_{k}}R^k_{u}$, which is used to transmit the global model to the UDs to start the next global iteration.


\begin{figure}[t!]
	\centering
	\includegraphics[width=0.75\linewidth]{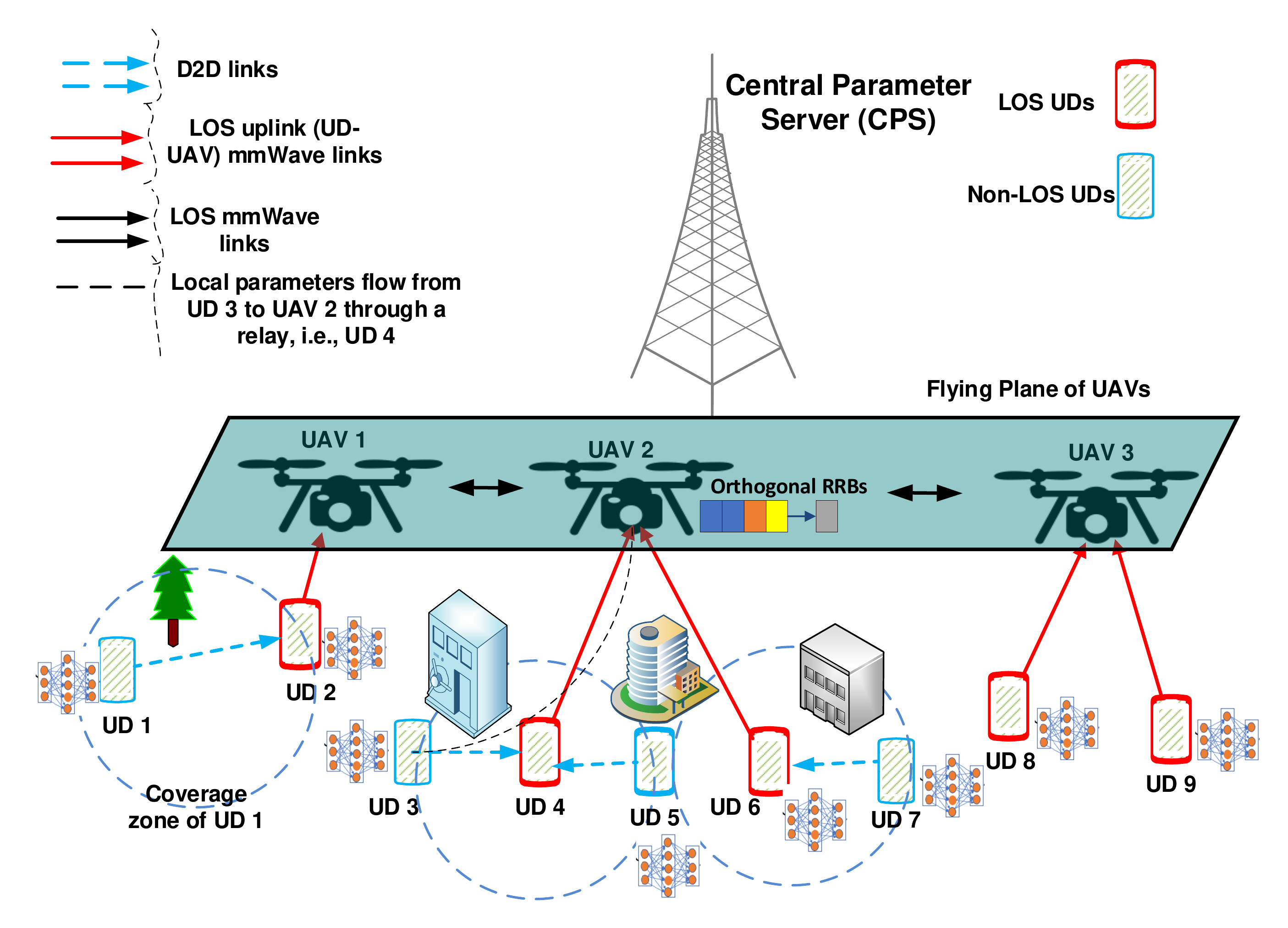}
	\caption{ATIN network with one CPS, $3$
		UAVs, $9$ UDs, and a set of RRBs per each UAV. For instance, UDs $1$, $3$, $5$, and $7$ do not have direct LOS links to the UAVs. Thus, they transmit the trained local models to the UAVs via LOS UDs, e.g.,   UDs $2$, $4$, and $6$. UDs $8$ and $9$ can transmit their models directly to the UAVs via LOS mmWave links.}
	\label{fig1}
\end{figure}

\vspace{-0.55cm}
\subsection{Transmission Time Structure}
The UAVs start local model aggregations  after receiving the local trained models of the scheduled UDs across all the RRBs. Since different UDs $\mathcal U_{los}$ will have different transmission rates, they will have different transmission durations for uploading their trained parameters to the UAVs/RRBs. Let $s$ be the size of the UD’s local vector parameter 
(which is the same for the global model), expressed in bits. Note that the analysis in this subsection is for the transmission duration of one global iteration $t$. For simplicity, we represent $X$ as the number of elements in the set $\mathcal X$. The time required by the $u$-th UD, $u\in \mathcal U_{los}$, to reliably transmit its model update to the $k$-th selected UAV over the $b$-th RRB is then given by $T_u=\frac{s}{R^u_{k,b}}$. With this consideration, we can see that, given the number of participating UDs $U_{los}$, the transmission duration is $\max_{u \in  \mathcal U_{los}}\{T_u\}=\max_{u \in \mathcal U_{los}}\frac{s}{R^u_{k,b}}$. When $U_{los}$ is large, $\max_{u \in  \mathcal U_{los}}\{T_u\}$ can dramatically grow. The transmission duration is therefore constrained by the minimum rate of the scheduled UDs $\mathcal U_{los}$, i.e.,  $\min_{u \in \mathcal U_{los}}\{R^u_{k,b}\}$. Without the  loss of generality, let us assume that
UD $u \in \mathcal U_{los}$ has the minimum rate
that is denoted by $R^u_{min}$. The corresponding transmission duration is $\frac{s}{R^u_{min}}$. 
The design of  $R^u_{min}$  dominates the local models transmission duration from the UDs to the UAVs, thus it dominates the time duration of one FL global iteration. This is because the FL time consists of  the local models transmission time and the learning computation time. Since the computation times of the UDs for local learning does not differ much, the FL time of one global iteration is dominated by $R^u_{min}$. Thus, $R^u_{min}$ can be adapted to include fewer or more UDs in the training process. 

For the different transmission durations $\mathcal U_{los}$, some UDs will finish transmitting their local models before other UDs. Thus, high transmission rate UDs in $\mathcal U_{los}$ will have to wait to start a new iteration simultaneously with relatively good transmission rate UDs. We propose to efficiently exploit such waiting times to assist the UDs that have non-LOS channels to the UAVs.  Define the portion of the time that not being used by $\bar{u}$-th UD (i.e., $\bar{u}\neq u, u \in \mathcal U_{los}$) at the $t$-th iteration is referred
to as the idle time of the $\bar{u}$-th UD and denoted by $T^{\bar{u}}_{idle}$. This idle time can be expressed as $T^{\bar{u}}_{idle}=(\frac{s}{R^{\bar{u}}_{k,b}}-\frac{s}{R^u_{min}})$ seconds.  Such idle time can be exploited by UDs $\bar{u}\in \mathcal U_{los}$  via D2D links if they ensure the complete transmission of the local parameters of the non-LOS UDs to the UAVs. More specifically, the idle time of the $\bar{u}$-th UD should be greater than or equal to the transmission duration of sending the local parameters from the $\hat{u}$-th non-LOS UD  to the $\bar{u}$-th UD plus the time duration of forwarding the local parameters from the $\bar{u}$-th UD to the $k$-th UAV. Mathematically, it must satisfy $T^{\bar{u}}_{idle}\geq (\frac{s}{R^{\bar{u}}_{\hat{u}}}+\frac{s}{R^{\bar{u}}_{k,b}})$. From now on, we will use the term relay to UD   $\bar{u}\neq u, \bar{u} \in \mathcal U_{los}$. In relay mode, each communication period is divided into two intervals corresponding to the non-LOS UD-relay phase (D2D communications) and relay-UAV phase (mmWave communication). The aforementioned transmission duration components of UDs and relays for one global iteration is shown in Fig. \ref{fig2}. Note that UDs can re-use the same frequency band and transmit  simultaneously via
D2D links.

When the $\hat{u}$-th UD does not have a LOS communication to any of the UAVs, it may choose the $\bar{u}$-th UD as its relay if the $\bar{u}$-th relay is located in the coverage zone of the $\hat{u}$-th UD. Let $\mathcal U_{\hat{u}}$ is the set of relays in the coverage zone of UD $\hat{u}$. Let $h^{\bar{u}}_{\hat{u}}$ denote the channel gain for the D2D link between the $\hat{u}$-th UD and the $\bar{u}$-th relay. Then, the achievable rate of D2D pair $(\hat{u}, \bar{u})$ is given by $
R^{\bar{u}}_{\hat{u}}=W\log_{2}(1+\frac{p |h^{\bar{u}}_{\hat{u}}|^2}{N_0}), \forall \bar{u} \in \mathcal U_{los}, \hat{u} \in \mathcal U_{non}$. In relay mode, the transmission duration for sending the local parameter of the $\hat{u}$-th UD to the $k$-th UAV through relay $\bar{u}$  is $\mathtt T_{\hat{u}}=\frac{s}{R^{\bar{u}}_{\hat{u}}}+\frac{s}{R^{\bar{u}}_{k,b}}$, which should satisfy $\mathtt T_{\hat{u}} \leq T^{\bar{u}}_{idle}$.

\begin{figure}[t!]
	\centering
	\includegraphics[width=0.55\linewidth]{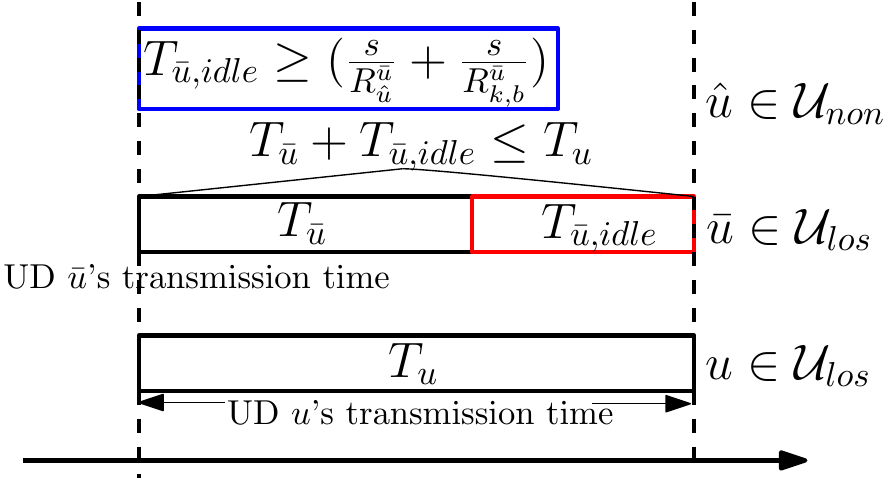}
	\caption{Transmission time structure for LOS UDs and non-LOS UDs
		for the $t$-th global iteration.}
	\label{fig2}
\end{figure}

\vspace{-0.44cm}
\section{FedMoD}
\subsection{Federated Learning Process}
\vspace{-0.25cm}
In FL, each UD $u$ possesses a set of local training data, denoted as $\mathcal D_u$. The local loss function on the dataset of the $u$-th UD can be calculated as
\begin{equation}\label{L_n}
F_u(\mathbf w)=\frac{1}{|\mathcal D_u|}\sum_{(x_i,y_i)\in \mathcal D_u}f_i(\mathbf w), \forall u\in \mathcal U,
\end{equation} 
where $x_i$ is the sample $i$’s input (e.g., image
pixels) and $y_i$ is the sample $i$’s output (e.g., label of the image) and $f_i(\mathbf w)$ is the loss function that measures the local training model error of the $i$-th data sample. 
The collection of data samples at the set of UDs that is associated with the $k$-th UAV 
is denoted as $\mathcal {\tilde{D}}_k$, and the training data at all the learning involved UDs, denoted as $\mathcal U_{inv}$, is denoted as $\mathcal D$. The ratios
of data samples are defined as $\hat{m}_u=\frac{|\mathcal D_u|}{|\mathcal {\tilde{D}}_k|}$, $m_u=\frac{|\mathcal D_u|}{|\mathcal D|}$, and $\tilde{m}_k=\frac{|\mathcal {\tilde{D}}_k|}{|\mathcal D|}$, respectively.
We define the loss function for the $k$-th UAV as the average local loss across the $k$-th cluster $
\hat{F}(\mathbf w)=\sum^{|\mathcal U_k|}_{u=1} \frac{|\mathcal D_u|}{|\mathcal {\tilde{D}}_k|} F_u(\mathbf w)$. The global loss function $F(\mathbf w)$ is then defined as the average loss across all the clusters $
F(\mathbf w)=\sum^{|\mathcal U_{inv}|}_{u=1} \frac{|\mathcal D_u|}{|\mathcal D|} F_u(\mathbf w)$. The objective of the FL model training is to find the optimal model parameters $\mathbf w^*$ for $F(\mathbf w)$ that is expressed as follows $\mathbf w^*=\arg\min_{\mathbf w} F(\mathbf w)$. 
In this work, we propose FedMoD  that involves three main procedures:
1) local model update at the UDs, 2) local model aggregation at the UAVs, and 3) model dissemination between the UAVs.

\textit{1) Local Model Update:} Denote the model of the $u$-th UD at the $t$-th global iteration as $\mathbf w_u(t)$. This UD performs model updating based on its local dataset by using stochastic gradient descent (SGD) algorithm, which is expressed as follows
\begin{equation}
\mathbf w_u(t)=\mathbf w_u(t-1)-\lambda g(\mathbf w_u(t-1)),
\end{equation}
where $\lambda$ is the learning rate and $g(\mathbf w_u(t-1))$ is the stochastic gradient computed on the dataset of the $u$-th UD.

\textit{2) Local Model Aggregation:} After all the selected UDs completing their local model updates,  they offload their model parameters over the available RRBs to the associated UAVs. A typical UAV $k$ aggregates the received models by computing a weighted sum as follows
\begin{equation} \label{w_UAV}
\mathbf {\tilde{w}}_k(t)=\sum_{u\in \mathcal U_k} \hat{m}_u\mathbf w_u(t), \forall k \in \mathcal K.
\end{equation}
\textit{3) Model Dissemination:} Each UAV
disseminates its local aggregated model to the one-hop neighboring UAVs. The model dissemination includes $l=1, 2, \cdots, \alpha$ times of model dissemination until at least one UAV receives the local aggregated models of other UAVs, where $\alpha$ is the number of dissemination rounds. Specifically, at the $t$-th iteration, the $k$-th UAV aggregates the local models of its associated UDs as in \eref{w_UAV}.

At the beginning of the model dissemination step, the  $k$-th UAV knows only $\mathbf {\tilde{w}}_k(t)$ and does not know the models of other UAVs' models $\mathbf {\tilde{w}}_{j}(t), j\neq k, \forall j\in \mathcal K$. Consequently, at the $t$-th global iteration and $l$-th round, the  $k$-th UAV has the following
two sets:
\begin{itemize}
	\item The \textit{Known} local aggregated model: Represented by $\mathcal H^l_k(t)=\{\mathbf {\tilde{w}}_{k}(t)\}$.
	\item The \textit{Unknown} local aggregated models: Represented by $\mathcal W^l_k(t)=\{\mathbf {\tilde{w}}_{j}(t), \mathbf {\tilde{w}}_{\tilde {j}}(t), \cdots, \mathbf {\tilde{w}}_{K}(t)\}$ and defined as the set of the local aggregated models of other UAVs.
\end{itemize}
These two sets are referred as the side information of the UAVs. For instance, at $l=\alpha$, the side information of the $k$-th UAV is $\mathcal H^\alpha_k(t)=\{\mathbf {\tilde{w}}_{k}(t), \mathbf {\tilde{w}}_{j}(t), \mathbf {\tilde{w}}_{\tilde {j}}(t), \cdots, \mathbf {\tilde{w}}_{K}(t)\}$ and $\mathcal W^\alpha_k(t)=\emptyset$. To achieve global model consensus, UAV $k$ needs to know the other UAVs' models, i.e., $\mathcal W_k(t)$, so as to aggregate a global model for the whole network. To this end, we propose  efficient model dissemination scheme that enables the UAVs to obtain their \textit{Unknown} local aggregated models $\mathcal W_k(t), \forall k \in \mathcal K,$ with minimum dissemination latency.

\vspace{-0.25cm}
\subsection{Model Dissemination}\label{Sdiss}
\vspace{-0.25cm}
To overcome the need of CPS for global aggregations or UAV coordination, an efficient distributed model dissemination method is developed. Note that all the associations of
UAVs $\mathcal K_k$ can be computed locally at the $k$-th UAV since all the needed information (e.g., complex channel gains and the indices of the local aggregated models) are locally available. In particular, UAV $k \in \mathcal K$ knows the information of its  neighboring UAVs only. 

At each dissemination round, transmitting UAVs use the previously mentioned side information to perform XOR model encoding, while receiving UAVs need the stored models to obtain the \textit{Unknown} ones. The entire process
of receiving the \textit{Unknown} models takes a small
duration of time. According to the reception
status feedback by each UAV,
the UAVs distributively select the transmitting UAVs and their models to be transmitted to the
receiving UAVs at each round $l$. The transmitted models can be one of the
following two options for each receiving UAV $i$:
\begin{itemize}
	\item Non-innovative model (NIM): A coded model is non-innovative	for the receiving UAV $i$ if it does not contain any model that is
	not known to UAV $i$.
	\item Decodable model (DM): A coded model is
	decodable for the receiving UAV $i$ if it contains just one model that is not known to UAV $i$.
\end{itemize}
In order to represent the XOR coding opportunities among the models not known at
each UAV, we introduce a FedMoD conflict graph. At round $l$, the
FedMoD conflict graph is denoted by $\mathcal G(\mathcal V(l), \mathcal E(l))$, where
$\mathcal V(l)$ refers to the set of vertices, $\mathcal E(l)$ refers to the set of encoding edges. Let $\mathcal K_{k}$ be the set of neighboring UAVs to the $k$-th UAV, and let $\mathcal K_{w}\subset \mathcal K$ be the set of UAVs that still wants some local aggregated models. Hence, the FedMoD graph is designed by generating all vertices for the $k$-th possible UAV transmitter that can provide some models to other UAVs, $\forall k \in \mathcal K$. The vertex set $\mathcal V(l)$ of the entire graph is the union of vertices of all possible transmitting UAVs.
Consider, for now, generating the vertices  of the $k$-th UAV. Note that  the $ k$-th UAV can exploit its previously received models $\mathcal H^l_{k}(t)$ to transmit an encoded/uncoded model to the set of requesting UAVs. Therefore, each
vertex is generated for each model $m\in \mathcal W^l_{i}(t)\cap \mathcal H^l_{k}(t)$ that is requested by each UAV $i\in \mathcal K_{w}\cap \mathcal K_{k}$ and for each achievable rate of the $k$-th UAV $r \in \mathcal R_{k,i} = \{r \in \mathcal R_{k}| r \leq r_{k,i}~ \text{and} ~i\in \mathcal K_w\cap\mathcal K_{k}$\}, where $\mathcal R_{k,i}$ is a set of achievable capacities between the $k$-th UAV and the $i$-th UAV, i.e.,  $\mathcal R_{k,i} \subset \mathcal R_{k}$.
Accordingly, the  $i$-th neighboring UAV in    $ \mathcal K_{k}$ can  receive a model from the $k$-th UAV. Therefore, we generate $|\mathcal R_{k,i}|$ vertices for a requesting  model  $m \in \mathcal H^l_{k}(t)  \cap \mathcal W^l_{i}(t), \forall i \in  \mathcal K_w\cap \mathcal K_{k}$. A vertex $v^{k}_{i,m, r} \in \mathcal V(l)$ indicates the $k$-th UAV can transmit the $m$-th model to the $i$-th UAV with a rate $r$. We define the utility
of vertex $v^{k}_{i,m, r}$ as 
\begin{equation}\label{ww}
w(v^{k}_{i,m, r})=rN_k,
\end{equation}
where $N_k$ is the number of neighboring  UAVs that can be served by the $k$-th UAV. This weight metric shows two potential benefits (i) $N_k$ represents   that the $k$-th transmitting UAV is connected to many other UAVs that are requesting models in $\mathcal H^t_{k}(l)$; and (ii) $r$  provides a balance between  the transmission rate and the number of scheduled  UAVs.


Since UAVs  communicate among them, their connectivity can be characterized by an undirected graph  with  sets of vertices and connections. All possible conflict connections  between vertices (conflict edges between circles) in the FedMod conflict graph are provided as follows. Two vertices $v_{i,m, r}^k$ and $v_{i',m', r'}^{k'}$ are adjacent   by a conflict edge in $\mathcal G$,  if one of the  following conflict conditions (CC) is true.
\begin{itemize}
	\item \textit{\textbf{CC1.} (encoding conflict edge): ($k={k^\prime}$) and ($m\neq m^\prime$) and ($m,m^\prime$) $\notin \mathcal H^l_{k^\prime}(t)\times \mathcal H^l_{k}(t)$. A conflict edge between vertices in the same local FedMoD conflict graph is connected as long as their corresponding are not decodable to a set of scheduled UAVs.}
	\item \textit{\textbf{CC2.} (rate conflict edge): ($k = k^\prime$) and ($k \neq k^\prime$) and ($r\neq r'$). All adjacent vertices  correspond to  the same (or different) UAV $k$ should have the same achievable rate.}
	\item \textit{\textbf{CC3.} (transmission conflict edge): ($k \neq k^\prime$) and ($i=i^\prime$). The same UAV cannot be scheduled to two different UAVs $k$ and $k^\prime$.}
	\item \textit{\textbf{CC4.} (half-duplex conflict edge): ($k = i^\prime$) or ($k^\prime=i$). The same UAV can not transmit and receive in the same  dissemination round.}
\end{itemize}

To distributively disseminate the local aggregated models among the UAVs, we propose a graph theory method as follows. Let $\mathcal{S}_{k}$ represent the associations of the neighboring UAVs in the coverage zone  of the $k$-th UAV, i.e., the associations of UAV $k$ to the set $\mathcal K_k$. Then, let the local  FedMoD conflict graph $\mathcal G_k(\mathcal S_k)\subset \mathcal G$ for an arbitrary UAV $k\in \mathcal K$ represent the set of associations $\mathcal S_k$. Our proposed distributed algorithm has two phases: i) the initial phase and ii) the conflict solution phase. In the initial phase, UAV $k\in \mathcal K$  constructs the local FedMoD conflict graph $\mathcal G_k(\mathcal S_k)$ and selects its targeted neighboring UAVs using the maximum weight independent set (MWIS) search method \cite{TWC, Access} that results in MWIS $\mathbf S_k$. Each UAV exchanges its scheduled UAVs with its neighbor UAV. Then the conflict solution phase starts. The UAV that is associated to multiple UAVs (UAV that is located at the overlapped regions of UAVs) is assigned to one UAV that offers the highest weight of scheduling that UAV. UAVs  that do not offer the maximum weight cannot schedule that UAV, and  therefore remove that UAV from their set of associated UAVs and vertices. We then design the new graph. We repeat this process until all the conflicting UAVs are scheduled to at most a single transmitting  UAV. The details process of the algorithm for a single dissemination round are presented in Algorithm \ref{Algorithm2}. 

\begin{algorithm}[t!]
	\SetAlgoLined
	\caption{Distributed UAV-UAV Scheduling for Model Dissemination}
	\label{Algorithm2}
		\KwData{$\mathcal K$, $\tilde{\mathbf w}_k, \mathcal H^0_k(t), \mathcal W^0_k(t), \forall k\in \mathcal K$.}
		\textbf{Initialize Phase:}\\
	    \textbf{Initialize:}  $\mathtt K=\emptyset$.\\
		\For{\textbf{all} $k\in \mathcal K$}{
		Construct $\mathcal G_k(\mathcal K_k)$ and	calculate weight $w(v)$ using \eref{ww}, $\forall v\in \mathcal G_k$.\\
		Find MWIS $\mathbf S_k$.
	        }
		\textbf{Conflict Solution Phase:}
		\For{$i=1, 2, \cdots$}{
		Transmit $\hat{\mathbf S}_k=\{j\in \mathcal K_k ~|~ j\in \mathbf S_k\}$.\\
		Set $\mathtt K=\{j\in \mathcal K~|~ \exists (k,k')\in \mathcal K^2, j\in \hat{\mathbf S}_k\cap\hat{\mathbf S}_{k'}\}$.\\
		\For{\textbf{all} $j\in \mathtt K$}{
		Set $\hat{\mathcal K}(j)=\{k\in \mathcal K~|~j\in \hat{\mathbf S}_k\}$.\\
		\For{\textbf{all} $k\in \hat{\mathcal K}(j)$}{
		Set $M_{kj}=\sum_{v\in \mathbf S_k}w(v)$ and $\mathcal K_k=\mathcal K_k\backslash \{j\}$.\\
		Construct $\mathcal G_k(\mathcal K_k)$ and compute $w(v)$ by \eref{ww} and
		solve $\tilde{\mathbf S}_k$ MWIS.\\
		Set $\tilde{M}_{kj}=\sum_{v\in \tilde{\mathbf S}_k}w(v)$ and transmit $M_{kj}$ and $\tilde{M}_{kj}$.
	     } 
	    Set $k^*=\arg\max_{k\in \hat{\mathcal K}(j)}\left(M_{kj}+\sum_{k'\in \hat{\mathcal K}(j), k\neq k'}\tilde{M}_{k'j}\right)$.\\
		  Set $\mathcal K_{k^*}=\mathcal K_{k^*}\cup \{j\}$.\\
		\For{\textbf{all} $k\in \hat{\mathcal K}(j)\backslash\{k^*\}$}{
		Set $\mathbf S_k=\tilde{\mathbf S}_k$.
	}
	}
	}
		\KwResult{$\mathbf S=\mathbf S_k, \cdots $}
	 \end{algorithm}

\subsection{Illustration of the Proposed Model Dissemination Method}
For further illustration, we explain the dissemination method that is implemented at the UAVs through an example of the network topology of Fig. \ref{fig3}.  Suppose that all the UAVs have already received the local models of their scheduled UDs and performed the local model averaging. Fig. \ref{fig3} presents the side information status of
each UAV at round $l=0$.

\begin{figure}[t!]
	\centering
	\includegraphics[width=0.95\linewidth]{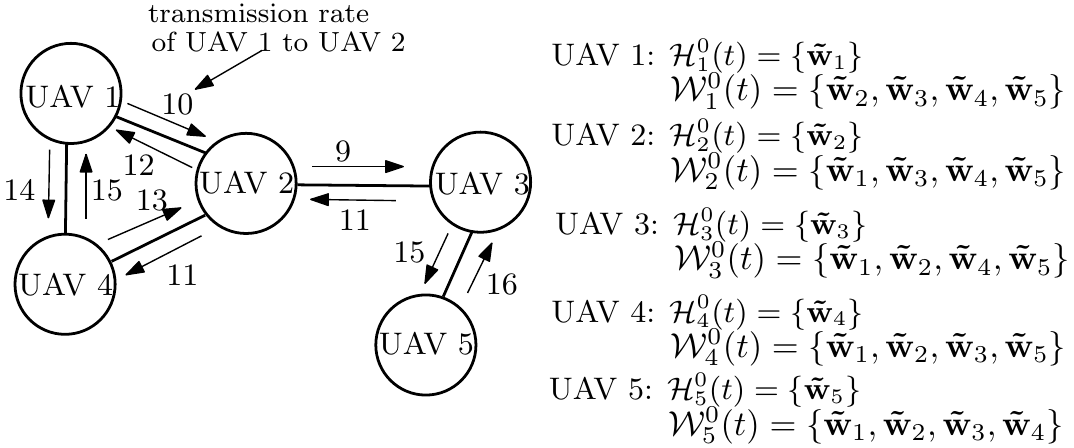}
	\caption{A simple example of $5$ UAVs with their arbitrary transmission rates and initial side information at round $l=0$.}
	\label{fig3}
\end{figure}

\textbf{Round 1:} Since UAV $2$ has good reachability to many UAVs ($\mathcal K_2=\{1, 4, 3\}$), it transmits its model $\mathbf {\tilde{w}}_{2,0}$ to UAVs 1, 4, and 3 with a transmission rate of $r(l=1)=\min\{12, 11, 9\}=9$ Mbps (\textbf{CC2} is satisfied). Note that UAV $5$ can not transmit to UAV $3$ according to \textbf{CC3}, i.e., UAV $3$ is already scheduled to the transmitting UAV $2$. When UAV $2$
finishes model transmission, the \textit{Known} sets of the receiving UAVs is updated to  $\mathcal H^1_1(t)=\{\mathbf {\tilde{w}}_{1}, \mathbf {\tilde{w}}_{2}\}$, $\mathcal H^1_3(t)=\{\mathbf {\tilde{w}}_{3}, \mathbf {\tilde{w}}_{2}\}$, and $\mathcal H^1_4(t)=\{\mathbf {\tilde{w}}_{4}, \mathbf {\tilde{w}}_{2}\}$. Accordingly, their \textit{Unknown} sets are: $\mathcal W^1_1(t)=\{\mathbf {\tilde{w}}_{3}, \mathbf {\tilde{w}}_{4}, \mathbf {\tilde{w}}_{5}\}$,  $\mathcal W^1_3(t)=\{\mathbf {\tilde{w}}_{1}, \mathbf {\tilde{w}}_{4}, \mathbf {\tilde{w}}_{5}\}$,  $\mathcal W^1_4(t)=\{\mathbf {\tilde{w}}_{1}, \mathbf {\tilde{w}}_{3}, \mathbf {\tilde{w}}_{5}\}$.   

\textbf{Round 2:} Although UAV $2$ has good reachability to many UAVs, it would not be selected as a transmitting UAV at $l=2$. This is becasue UAV has already disseminated its side information to the neighboring UAVs, thus UAV $2$ does not have any vertex in the FedMoD conflict graph.  In this case, UAVs $4$ and $5$ can simultaneously transmit models $\mathbf {\tilde{w}}_{4}$ and $\mathbf {\tilde{w}}_{5}$, respectively, to the receiving UAVs $\{1, 2\}$ and $\{3\}$.  When UAVs $4$ and $5$
finish models transmission, the \textit{Known} sets of the receiving UAVs is updated to  $\mathcal H^2_1(t)=\{\mathbf {\tilde{w}}_{1}, \mathbf {\tilde{w}}_{2}, \mathbf {\tilde{w}}_{4}\}$, $\mathcal H^2_2(t)=\{\mathbf {\tilde{w}}_{2}, \mathbf {\tilde{w}}_{4}\}$, and $\mathcal H^2_3(t)=\{\mathbf {\tilde{w}}_{3}, \mathbf {\tilde{w}}_{2}, \mathbf {\tilde{w}}_{5}\}$. Clearly, UAVs $4$ and $5$ transmit their models to the corresponding UAVs with transmission rates of $r_4=\min\{13,15\}=13$ Mbps and $r_5=16$ Mbps, respectively. However, for simultaneous transmission and from \textbf{CC2}, all the vertices of the corresponding UAVs $\{1, 2, 3\}$ should have the same achievable rate. Thus, UAVs $4$ and $5$ adopt one transmission rate which is $r(l=2)=\min\{r_4,r_5\}=13$ Mpbs.

\textbf{Round 3:}  UAV $1$ transmits model $\mathbf {\tilde{w}}_{1}$ to the receiving UAVs $\{2, 4\}$, and their \textit{Known} sets are updated to  $\mathcal H^3_2(t)=\{\mathbf {\tilde{w}}_{2}, \mathbf {\tilde{w}}_{4}, \mathbf {\tilde{w}}_{1}\}$, $\mathcal H^3_4(t)=\{\mathbf {\tilde{w}}_{4}, \mathbf {\tilde{w}}_{2}, \mathbf {\tilde{w}}_{1}\}$. UAV $1$ transmits its model to the corresponding UAVs with a transmission rate of $r(l=3)=\min\{10,14\}=10$ Mbps.

\textbf{Round 4:}  Given the updated side information of the UAVs, UAV $3$ can encode models $\mathbf {\tilde{w}}_{5}$ and $\mathbf {\tilde{w}}_{2}$ into the encoded model $\mathbf {\tilde{w}}_{5} \oplus \mathbf {\tilde{w}}_{2}$ and broadcasts it to UAVs $2$ and $5$.  Upon reception this encoded model, UAV $5$ uses the stored model $\mathbf {\tilde{w}}_{5}$
to complete model decoding $(\mathbf {\tilde{w}}_{5} \oplus \mathbf {\tilde{w}}_{2})\oplus \mathbf {\tilde{w}}_{5} =\mathbf {\tilde{w}}_{2}$. Similarly, UAV $5$ uses the stored model $\mathbf {\tilde{w}}_{2}$
to complete model decoding $(\mathbf {\tilde{w}}_{5} \oplus \mathbf {\tilde{w}}_{2})\oplus \mathbf {\tilde{w}}_{2} =\mathbf {\tilde{w}}_{5}$. The broadcasted model is thus decodable for both UAVs $5$ and $2$ and has been transmitted with a rate of $r(l=4)=\min\{11,15\}=11$ Mpbs. The \textit{Known} sets of these receiving UAVs are as follows: $\mathcal H^4_2(t)=\{\mathbf {\tilde{w}}_{2}, \mathbf {\tilde{w}}_{4}, \mathbf {\tilde{w}}_{1}, \mathbf {\tilde{w}}_{5}\}$ and  $\mathcal H^4_5(t)=\{\mathbf {\tilde{w}}_{5}, \mathbf {\tilde{w}}_{2}\}$.

\textbf{Round 5:}  Given the updated side information of the UAVs at $l=4$, UAV $3$ transmits $\mathbf {\tilde{w}}_{3}$ to UAVs $2$ and $5$.  Upon reception this model, UAV $2$ has obtained all the required models, i.e., $\mathcal H^5_2(t)=\{\mathbf {\tilde{w}}_{1}, \mathbf {\tilde{w}}_{2}, \mathbf {\tilde{w}}_{3}, \mathbf {\tilde{w}}_{4}, \mathbf {\tilde{w}}_{5}\}$ and $\mathcal W^5_2(t)=\{\emptyset\}$.  The broadcasted model is transmitted with a rate of $r(l=5)=\min\{11,15\}=11$ Mpbs. Since UAV $2$ has all the local aggregated models of other UAVs, it can aggregate them all which results the global model at the $t$-th iteration: 
\begin{equation}\label{7eq}
\mathbf {\tilde{w}}(t)=\frac{1}{D}(\mathbf {\tilde{w}}_{1}+\mathbf {\tilde{w}}_{2}+\mathbf {\tilde{w}}_{3}+\mathbf {\tilde{w}}_{4}+\mathbf {\tilde{w}}_{5}).
\end{equation}
Therefore, the global model $\mathbf {\tilde{w}}$ is broadcasted from UAV $2$ to UAVs $\{1, 4, 3\}$ with a rate of $\min\{12, 11, 9\}=9$ Mpbs. Next, UAV $3$ can send $\mathbf {\tilde{w}}$ to UAV $5$ with a rate of $15$ Mbps. Therefore, all the UAVs obtain the shared global model $\mathbf {\tilde{w}}$ and broadcast it to their scheduled UDs to initialize the next iteration $t+1$. Note that the transmission duration of these dissemination rounds is 
\begin{equation}\label{diss}
T_{diss}=\underbrace{\frac{s}{9}}_{l=1}+\underbrace{\frac{s}{13}}_{l=2}+\underbrace{\frac{s}{10}}_{l=3}+\underbrace{\frac{s}{11}}_{l=4}+\underbrace{\frac{s}{11}}_{l=5}+\underbrace{\frac{s}{9}+\frac{s}{15}.}_{\mathbf {\tilde{w}}~ \text{broadcasting}}
\end{equation}
The size of a typical model is $s=9.098$ Kb \cite{saif_new, FL_Lit_1_4, FL3}, thus $T_{diss}=0.0059$ sec. Thanks to the efficient model dissemination proposed method that disseminates models from transmitting UAVs to the closest receiving UAVs with good connectivity, the dissemination delay is negligible.\\
\textit{\textbf{Remark 1:}} \textit{In the fully connected model, each UAV can receive the local aggregated models of all UAVs in $K$ dissemination rounds, where each UAV takes a round for broadcasting its local aggregated model to other UAVs.}

The steps of FedMoD that includes local model update, local aggregation at the UAVs, and model dissemination among the UAVs are summarized in Algorithm \ref{alg1}. 
\begin{algorithm}[t!]
	\SetAlgoLined
	\KwData{Number of  global iterations $T$, number of local
		iterations $T_l$}
	\textbf{Initialize:} $t=1$ and start with the same model for each UD $u$: $\mathbf w_{u}(t-1)$.\\
			\For{$t=1, 2, \cdots, T$}{
				\For{each UD $u \in \mathcal U_{inv}$ in parallel}{
				Update the local model as $\mathbf w_{u}(t)$ according to (5).
							}
			\For{each UAV $k \in \mathcal K$ in parallel}{
				Receive the most updated model from the UDs in $\mathcal U_k$.\\
				Obtain $\mathbf {\tilde{w}}_k(t)$  by performing local model aggregation according to \eref{w_UAV}.\\
				\For{$l=1, 2, \cdots, \alpha$}{
					UAVs dissiminate their models among them as expalined in Section III-B and Algorithm \ref{Algorithm2}.\\
				}}
			Update $\mathbf {\tilde{w}}_k(t-1)=\mathbf {\tilde{w}}_k(t)=\mathbf {w}(t)$.\\
			Broadcast $\mathbf {w}(t)$
		to the UDs in $\mathcal U_k$.
			
		Update $t=t+1$.
		}

	\KwResult{Final global model $\mathbf w$.}
	
	\caption{FedMoD Algorithm} \label{alg1}
	
\end{algorithm}
\vspace{-0.55cm}
\subsection{Convergence Analysis}\label{CA}
\vspace{-0.25cm}
In this sub-section, we prove the convergence of FedMoD. To facilitate the convergence rate analysis of the proposed scheme, we first provide the following assumptions. For all $u\in \mathcal U$, we assume: 
\begin{enumerate}
	\item The local loss function is $L$-smooth, i.e.,. This assumption implies that for some $L >0$, 
	$\lVert  \triangledown  F_u(\mathbf w(t+1))-\triangledown  F_u(\mathbf w(t))  \rVert_2 \leq L \lVert  \mathbf w(t+1)-\mathbf w(t)  \rVert_2$.
	\item The mini-batch gradient is unbiased, i.e., $
	\mathbb E_{\mathcal D_u|\tilde{\mathbf w}}[f(\mathcal D_u;\tilde{\mathbf w})]=\triangledown F_u(\tilde{\mathbf w}),$	and there exists $\sigma >0$ such that $
	\mathbb E_{\mathcal D_u|\tilde{\mathbf w}}\bigg \lVert [f(\mathcal D_u;\tilde{\mathbf w})]-\triangledown F_u(\tilde{\mathbf w})\bigg \lVert^2_2 \leq \sigma^2$.
	\item For the degree of non-IIDness, we assume that there exists $\kappa>0$ such that $
	\lVert  \triangledown F_u(\tilde{\mathbf w})- \triangledown F(\tilde{\mathbf w}) \lVert_2 \leq \kappa,$	where $\kappa$ measures the degree of data heterogeneity across all UDs.
\end{enumerate}

In centralized FL, the global model at the CPS at each global iteration evolves according to the following expression \cite{FL_Lit_1_4}:
\begin{align}\label{ag3}
\mathbf{w}(t+1)= \mathbf{w}(t)-\lambda \mathbf G(t),
\end{align}
where $\mathbf{w}(t)=[\mathbf w_u(t)]_{u\in \mathcal U_{inv}}$ and $\mathbf G(t)= [g(\mathbf w_u(t))]_{u\in \mathcal U_{inv}}$. However, in FedMoD, the $k$-th UAV maintains a model updated based on the trained models of its scheduled UDs only and needs to aggregate the models of other UAVs using the model dissemination method as in \sref{Sdiss}. Theretofore, each UAV has insufficient model averaging unless the model dissemination method is performed until all UAVs obtain the global model defined in \eref{7eq}, i.e., at $l=\alpha$. In other words, at $l=\alpha$, the global model of our proposed decentralized FL should be the one mentioned in \eref{ag3}. For convenience, we define $\mathbf {\tilde{u}}(t)=\sum_{u\in \mathcal U_{inv}}m_u \mathbf w_u(t)$, and consequently, $\mathbf {\tilde{u}}(t)=\mathbf{\tilde{w}}(t)\mathbf m$. By multiplying both sides of the evolution expression in \eref{ag3} by $\mathbf m$, yielding the following expression
\begin{align}\label{agg3}
\mathbf{\tilde{u}}(t+1)= \mathbf{\tilde{u}}(t)-\lambda \mathbf G(t)\mathbf m,
\end{align}
Following \cite{FL_Dec_1_2, FL_Dec_1_3} and leveraging the evolution expression of
$\mathbf{\tilde{u}}(t)$ in \eref{agg3}, we bound the expected change of the local loss
functions in consecutive iterations as follows.
\begin{lemma}\label{lemma1}
	The expected change of the global loss function in two consecutive iterations can be bounded as follows
	\begin{align}\label{eq17} 
	&\mathbb{E}[F(\mathbf{\tilde{u}}(t+1))]-\mathbb{E}[F(\mathbf{\tilde{u}}(t))]  \nonumber \leq \frac{-\lambda}{2} \mathbb E  \lVert \triangledown F(\mathbf{\tilde{u}}(t)) \lVert^2_2\\ &+ \frac{\lambda^2 L}{2} \sum_{u=1}^{{U_{inv}}} m_u\sigma^2 -\frac{\lambda}{2}(1-\lambda L)\tilde{Q}+\frac{\lambda L^2}{2}\mathbb E \bigg \lVert \mathbf{\tilde{w}}(t)(\mathbf I-\mathbf M) \bigg \lVert^2_\mathbf M,
	\end{align}
	where $\tilde{Q}=	\mathbb{E}\bigg [ \bigg \lVert \sum_{u=1}^{U_{inv}} m_u \triangledown F_u(\mathbf w_u(t)) \bigg \lVert^2_2 \bigg ]$, $\mathbf M=\mathbf m \mathbf I^T$, and $\lVert \mathbf X \lVert_{\mathbf M}=\sum_{i=1}^M\sum_{j=1}^N m_{i,j}|x_{i,j}|^2$ is the
	weighted Frobenius norm of an $M\times N$ matrix $\mathbf X$.
\end{lemma}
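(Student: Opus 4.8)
The plan is to start from the $L$-smoothness of the global loss $F$. Since each $F_u$ is $L$-smooth and $F=\sum_u m_u F_u$ is a convex combination, $F$ inherits $L$-smoothness, giving the standard descent inequality $F(\mathbf{\tilde{u}}(t+1)) \le F(\mathbf{\tilde{u}}(t)) + \langle \triangledown F(\mathbf{\tilde{u}}(t)),\, \mathbf{\tilde{u}}(t+1)-\mathbf{\tilde{u}}(t)\rangle + \frac{L}{2}\lVert \mathbf{\tilde{u}}(t+1)-\mathbf{\tilde{u}}(t)\rVert_2^2$. I then substitute the recursion $\mathbf{\tilde{u}}(t+1)-\mathbf{\tilde{u}}(t) = -\lambda\, \mathbf{G}(t)\mathbf m$ from \eref{agg3}, so the first-order term becomes $-\lambda\langle \triangledown F(\mathbf{\tilde{u}}(t)),\, \mathbf{G}(t)\mathbf m\rangle$ and the second-order term becomes $\frac{\lambda^2 L}{2}\lVert \mathbf{G}(t)\mathbf m\rVert_2^2$.

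Next I take expectation over the mini-batch sampling. By the unbiasedness assumption, $\mathbb E[\mathbf{G}(t)\mathbf m]=\sum_u m_u \triangledown F_u(\mathbf w_u(t))$, so the cross term reduces to $-\lambda\langle \triangledown F(\mathbf{\tilde{u}}(t)),\, \sum_u m_u \triangledown F_u(\mathbf w_u(t))\rangle$. The central algebraic step is the polarization identity $2\langle a,b\rangle = \lVert a\rVert_2^2 + \lVert b\rVert_2^2 - \lVert a-b\rVert_2^2$ applied with $a=\triangledown F(\mathbf{\tilde{u}}(t))$ and $b=\sum_u m_u \triangledown F_u(\mathbf w_u(t))$. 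This splits the cross term into $-\frac{\lambda}{2}\lVert \triangledown F(\mathbf{\tilde{u}}(t))\rVert_2^2$, a piece $-\frac{\lambda}{2}\lVert b\rVert_2^2$ whose expectation is exactly $-\frac{\lambda}{2}\tilde Q$, and a residual $+\frac{\lambda}{2}\lVert a-b\rVert_2^2$.

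For the second-order term I use the bias--variance decomposition $\mathbb E\lVert \mathbf{G}(t)\mathbf m\rVert_2^2 = \lVert \sum_u m_u \triangledown F_u(\mathbf w_u(t))\rVert_2^2 + \mathbb E\lVert \sum_u m_u\big(g(\mathbf w_u(t))-\triangledown F_u(\mathbf w_u(t))\big)\rVert_2^2$. The first piece equals $\tilde Q$ in expectation; applying Jensen's inequality (the weights $m_u$ sum to one) together with the variance bound $\sigma^2$ controls the second piece by $\sum_u m_u\sigma^2$. Combining the $-\frac{\lambda}{2}\tilde Q$ from the cross term with the $+\frac{\lambda^2 L}{2}\tilde Q$ arising here collapses into the factor $-\frac{\lambda}{2}(1-\lambda L)\tilde Q$ appearing in the statement.

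The final ingredient, which I expect to be the main obstacle, is bounding the residual $\frac{\lambda}{2}\lVert a-b\rVert_2^2$ by the consensus-error term $\frac{\lambda L^2}{2}\mathbb E\lVert \mathbf{\tilde{w}}(t)(\mathbf I-\mathbf M)\rVert^2_{\mathbf M}$. Writing $\triangledown F(\mathbf{\tilde{u}}(t))=\sum_u m_u \triangledown F_u(\mathbf{\tilde{u}}(t))$, the difference $a-b$ equals $\sum_u m_u[\triangledown F_u(\mathbf{\tilde{u}}(t))-\triangledown F_u(\mathbf w_u(t))]$; Jensen's inequality followed by $L$-smoothness then yields $\lVert a-b\rVert_2^2 \le L^2\sum_u m_u \lVert \mathbf{\tilde{u}}(t)-\mathbf w_u(t)\rVert_2^2$. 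The care needed is in verifying that this weighted sum of per-UD deviations from the weighted mean $\mathbf{\tilde{u}}(t)$ is precisely the weighted Frobenius norm $\lVert \mathbf{\tilde{w}}(t)(\mathbf I-\mathbf M)\rVert^2_{\mathbf M}$, i.e., that $\mathbf{\tilde{w}}(t)\mathbf M$ replicates the weighted average across columns so that $\mathbf{\tilde{w}}(t)(\mathbf I-\mathbf M)$ collects exactly the deviations $\mathbf w_u(t)-\mathbf{\tilde{u}}(t)$ that the $\mathbf M$-weighting reweights by $m_u$. Assembling the four resulting terms and taking total expectation delivers \eref{eq17}.
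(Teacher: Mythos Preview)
Your proposal is correct and follows essentially the same route as the paper's proof in Appendix~A: $L$-smoothness descent lemma applied to the recursion \eqref{agg3}, unbiasedness to replace $\mathbf G(t)\mathbf m$ by $\sum_u m_u\triangledown F_u(\mathbf w_u(t))$ in the inner product, the polarization identity $2\langle a,b\rangle=\lVert a\rVert^2+\lVert b\rVert^2-\lVert a-b\rVert^2$ for the cross term, a bias--variance split plus the variance bound for the quadratic term, and Jensen together with $L$-smoothness to convert the residual $\lVert a-b\rVert^2$ into the weighted consensus error $\lVert \mathbf{\tilde w}(t)(\mathbf I-\mathbf M)\rVert^2_{\mathbf M}$. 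The only cosmetic difference is that the paper first uses independence of the per-UD stochastic noises to get $\sum_u m_u^2\sigma^2$ and then relaxes to $\sum_u m_u\sigma^2$, whereas you reach $\sum_u m_u\sigma^2$ directly via Jensen; both are valid.
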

For proof, please refer to Appendix A.

Notice that $\mathbf{\tilde{w}}(t)$ deviates from the desired global model due to the partial connectivity of the UAVs that results in the last term in the right-hand side (RHS) of
\eref{eq17}. However, through the model dissemination method and at $l=\alpha$, FedMoD ensures that each UAV can aggregate the models of the whole network at each global iteration before proceeding to the next iteration. Thus, such deviation is eliminated. 

Due to the model dissemination among the UAVs, there is a dissemination gap that is denoted by the \textit{dissemination gap} between the $k$-th and $j$-th UAVs as $\delta_{j,k}(t)$, which is the number of dissemination steps that the local aggregated model of the $j$-th UAV needs to be transmitted to  the $k$-th UAV. For illustration, consider the example in Fig. 3, the highest dissemination gap is the one  between UAVs $5$ and $1$ which is $3$. Thus, $\delta_{5,1}(t)=3$. The maximum dissemination gap of UAV $k$ is $\delta_k(t)=\max_{j\in \mathcal K}\{\delta_{j,k}(t)\}$. Therefore, a larger value of $\delta_{j,k}(t)$ implies that the model of each UAV needs more dissemination step to be globally converged. The following remark shows that $\delta_k(t)$ is upper bounded throughout the whole training process.\\
\textit{\textbf{Remark 2:}} \textit{There exists a constant $\delta_{max}$ such that $\delta_k(t)\leq \delta_{max}$, $\forall t\in T, k\in \mathcal K$. At any iteration $t$, the dissemination gap of the farthest UAV (i.e., the UAV at the network edge), $\delta_{max}= \alpha$ gives a maximal		value for the steps that the models of other UAVs have been disseminated to UAV $k$.}

Given the aforementioned analysis, we are now ready to prove the convergence of FedMoD.
	\begin{theorem}\label{th3} 
		If the learning rate $\lambda$ satisfies $
		1-\lambda L \geq 0, 1-2\lambda^2 L^2 >0$,	we have 
		\begin{align}\small \label{final_eq}
		\mathbb E [\lVert \triangledown F(\tilde{\mathbf u})(t)\lVert^2_2]\leq \frac{2\{\mathbb E[F(\tilde{\mathbf u})(0)-F(\tilde{\mathbf u})(T)]\}}{\delta }+\lambda L\sum_{u=1}^{U_{inv}}m_u\sigma^2
		\end{align}
	\end{theorem}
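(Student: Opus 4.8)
The plan is to turn the one-step descent inequality of \lref{lemma1} into a bound on the time-averaged (equivalently, the minimum over $t$) squared gradient norm via a standard telescoping argument, after first disposing of the two terms on the right-hand side of \eref{eq17} that the dissemination design and the step-size conditions render harmless.

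First I would eliminate the consensus-deviation term $\frac{\lambda L^2}{2}\mathbb E\lVert \mathbf{\tilde{w}}(t)(\mathbf I-\mathbf M)\rVert^2_{\mathbf M}$. As argued in the text following \eref{eq17} and formalized by Remark 2, the coded dissemination runs for $l=1,\dots,\alpha$ rounds, and at $l=\alpha$ every UAV holds the complete global aggregate \eref{7eq}. Hence each UAV's stored model equals the weighted average $\mathbf{\tilde{u}}(t)=\mathbf{\tilde{w}}(t)\mathbf m$, so that $\mathbf{\tilde{w}}(t)(\mathbf I-\mathbf M)=\mathbf 0$ and this term vanishes at the instant the next iteration begins. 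Next, since $\tilde Q=\mathbb E[\lVert\sum_u m_u\triangledown F_u(\mathbf w_u(t))\rVert^2_2]\ge 0$ and the first step-size condition gives $1-\lambda L\ge 0$, the term $-\frac{\lambda}{2}(1-\lambda L)\tilde Q$ is nonpositive and may be dropped while preserving the inequality.

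With these two simplifications, \eref{eq17} reduces to
\begin{align}
\mathbb E[F(\mathbf{\tilde u}(t+1))]-\mathbb E[F(\mathbf{\tilde u}(t))]\le -\frac{\lambda}{2}\mathbb E\lVert\triangledown F(\mathbf{\tilde u}(t))\rVert^2_2+\frac{\lambda^2 L}{2}\sum_{u=1}^{U_{inv}}m_u\sigma^2 .
\end{align}
Rearranging to isolate $\frac{\lambda}{2}\mathbb E\lVert\triangledown F(\mathbf{\tilde u}(t))\rVert^2_2$, summing over $t=0,\dots,T-1$ so that the loss differences telescope to $\mathbb E[F(\mathbf{\tilde u}(0))-F(\mathbf{\tilde u}(T))]$, and dividing through by $\frac{\lambda T}{2}$ yields
\begin{align}
\frac{1}{T}\sum_{t=0}^{T-1}\mathbb E\lVert\triangledown F(\mathbf{\tilde u}(t))\rVert^2_2\le \frac{2\{\mathbb E[F(\mathbf{\tilde u}(0))-F(\mathbf{\tilde u}(T))]\}}{\lambda T}+\lambda L\sum_{u=1}^{U_{inv}}m_u\sigma^2 .
\end{align}
Identifying $\delta=\lambda T$ and bounding the minimum over $t$ by the time-average recovers \eref{final_eq}.

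The step I expect to be the main obstacle is the rigorous justification that complete dissemination forces $\mathbf{\tilde w}(t)(\mathbf I-\mathbf M)=\mathbf 0$ at $l=\alpha$: this requires that the XOR-coded transmission scheme of \sref{Sdiss} is decodable at every UAV within the bounded number of rounds guaranteed by Remark 2, so the deviation is genuinely annihilated rather than merely reduced. Notably, the second step-size condition $1-2\lambda^2 L^2>0$ plays no role in this clean, zero-residual version; it becomes essential only in the alternative analysis in which the deviation term is retained and bounded recursively (as in the cited decentralized-FL works), where it guarantees that the induced recursion for $\mathbb E\lVert\mathbf{\tilde w}(t)(\mathbf I-\mathbf M)\rVert^2_{\mathbf M}$ is contractive and summable across iterations.
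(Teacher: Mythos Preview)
Your proposal is correct and follows essentially the same route as the paper's proof: drop the consensus-deviation term $\frac{\lambda L^2}{2}\mathbb E\lVert\mathbf{\tilde w}(t)(\mathbf I-\mathbf M)\rVert^2_{\mathbf M}$ by invoking full dissemination at $l=\alpha$, drop the nonpositive term $-\frac{\lambda}{2}(1-\lambda L)\tilde Q$ using $1-\lambda L\ge 0$, rearrange, and telescope. Your write-up is in fact more explicit than the paper's, which passes directly from the single-step bound to the statement involving $F(\tilde{\mathbf u}(0))-F(\tilde{\mathbf u}(T))$ without displaying the summation over $t$, and which writes $\lambda$ in the denominator of its final displayed line where the theorem has $\delta$; your identification $\delta=\lambda T$ (together with bounding the minimum by the time average) resolves that discrepancy, and your observation that the second condition $1-2\lambda^2L^2>0$ is never invoked is also accurate.
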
 
	\begin{proof} 
		From \eqref{eq17}, we have
		\begin{align} \label{27eq}\nonumber
			\frac{\lambda}{2} \mathbb E \lVert \triangledown F(\mathbf{\tilde{u}}(t)) \lVert^2_2 &\leq \mathbb{E}[F(\mathbf{\tilde{u}}(t))]- \mathbb{E}[F(\mathbf{\tilde{u}}(t+1))]\\&+\frac{\lambda^2 L}{2} \sum_{u=1}^{U_{inv}} m_u\sigma^2  -\frac{\lambda}{2}(1-\lambda L)\tilde{Q}.
		\end{align}	
		\begin{align} \label{2777eq}
		\mathbb E \lVert \triangledown F(\mathbf{\tilde{u}}(t)) \lVert^2 \nonumber & \leq \frac{2\{\mathbb{E}[F(\mathbf{\tilde{u}}(t))]- \mathbb{E}[F(\mathbf{\tilde{u}}(t+1))]\}}{\lambda} \\&+\lambda L \sum_{i=1}^{U_{inv}} m_u\sigma^2  -(1-\lambda L)\tilde{Q}
		\end{align}
		Since $1-\lambda L \geq 0$ from \thref{th3}, the third term in the RHS of \eqref{2777eq} is eliminated, thus we have
		\begin{align} \label{27eq_u}
		\mathbb E [\lVert \triangledown F(\tilde{\mathbf u})(t)\lVert^2]\leq \frac{2\{\mathbb E[F(\tilde{\mathbf u})(0)-F(\tilde{\mathbf u})(T)]\}}{\lambda}+\lambda L\sum_{u=1}^{U_{inv}}m_u\sigma^2
		\end{align}
\end{proof}

\vspace{-0.99cm} 
\section{FedMoD: Modeling and Problem Formulation}
\subsection{FL Time and  Energy Consumption}
\vspace{-0.22cm}
\textit{1) FL Time:} 
The constrained FL time at each global iteration  consists of both computation and wireless transmission time that is explained below.

The wireless transmission time consists of (1) the 	uplink transmission time for transmitting the local updates from the UDs to the associated UAVs $\mathcal K$. This transmission time is already discussed in Section II-C and represented by $T_u$. (2) The  transmission time for disseminating the local aggregated models among the UAVs. The model dissemination time among all the UAVs is $T_{diss}$ as given in \eref{diss}. (3) The downlink transmission time for transmitting the local aggregated models from the UAVs to the scheduled UDs $\mathcal U$. The downlink transmission time for UAV $k$ can be expressed $T^{do}_k=\frac{s}{R_k}$. On the other hand, the computation time for local learning at the $u$-th UD is expressed as $
T^{comp}_u=T_l\frac{Q_uD_u}{f_u},$ where $T_l$ is the number of local iterations to reach the local accuracy $\epsilon_l$ in the $u$-th UD, $Q_u$ as the number of CPU cycles to process one data sample, and $f_u$ is the computational frequency of the CPU in the $u$-th UD (in cycles per second).

By combining the aforementioned components, the FL time $\tau_k$ at the $k$-th UAV can be calculated as
	\begin{align}
		\tau_k & \nonumber=  \max_{u \in \mathcal U_{k}}T^{comp}_u+\max_{u \in \mathcal U_k} T^{com}_u+T^{do}_k \\&=\max_{u \in \mathcal U_k} \left\{T_l\frac{Q_u D_n}{f_u}\right\}+\max_{u \in \mathcal N_k}\left\{\frac{s}{R_{k,b}^u}\right\}+\frac{s}{R_k}.
	\end{align}  
Therefore, the total FL time over all global iterations $T$ is $\tau=T(\max_{k\in \mathcal K}(\tau_k)+T_{diss})$, which should be no more than the maximum FL time threshold $T_\text{max}$. This constraint, over all global iterations $T$, is expressed as
	\begin{align}\label{17} 
	\tau \nonumber=T\bigg (\underbrace{\max_{u \in \mathcal N} \left\{T_l\frac{Q_n D_u}{f_u}\right\}}_\text{local learning}+\underbrace{T_u}_\text{uplink transmission}+\underbrace{\max_{k \in \mathcal K}\left\{\frac{s}{R_k}\right\}}_\text{downlink transmission}\\  +\underbrace{T_{diss}}_\text{dissemination duration}\bigg)\leq T_\text{max}.
	\end{align}

\textit{2) Energy Consumption:} The system’s energy is consumed for local model training at the UDs, wireless models transmission, and UAVs' hovering in the air.

\textit{1) Local computation:} The well-known energy consumption model for the local computation is considered, where the energy 	consumption of the $u$-th UD to process a single CPU cycle is $\alpha f^2_u$, and $\alpha$	is a constant related to the switched capacitance \cite{CPU2, CPU3}. Thus, the energy consumption of the $u$-th UD	for local computation is $
	E^{comp}_u=T_{loc}C_uD_u\alpha f^2_u.$

\textit{2) Wireless models transmission:} The energy consumption to transmit the local model parameters to the associated UAVs can be denoted by $E^{com}_u$ and calculated as $	E^{com}_u=P_uT^{com}_u$. Then, the total energy consumption  $E_u$ at the $u$-th UD is $E_u=E^{comp}_u+E^{com}_u$. 
In a similar manner, the consumed energy for transmitting the local aggregated models	back to the associated UDs can be denoted by $E^{com}_k$ and calculated as $	E^{com}_k=PT^{com}_k$.

\textit{3) UAV's hovering energy:}  UAVs need  to remain stationary in the air, thus most of UAV's energy is consumed for hovering. The
UAV’s hovering power is expressed as \cite{zoheb} $
	p^{hov}=\sqrt{\frac{(mg)^3}{2 \pi r_p^2 n_p \rho}},$	where $m$ is UAV's weight, $g$ is the   gravitational	acceleration of the earth, $r_p$ is propellers' radius, $n_p$ is the number	of propellers, and $\rho$ is the air density. In general, these	parameters of all the UAVs are the same. The hovering time of the $k$-th UAV in each global iteration depends on $\tau$. Hence, the hovering energy of the $k$-th UAV 	can be calculated as $	E^{hov}_k=p^{hov}\tau^t$. In summary, the overall energy consumption of the $k$-th UAV and the $u$-th UD, respectively, are
\begin{align}\label{EE}
E_k=T\left\{E^{hov}_l+E^{com}_l\right\},  E_u=T\left\{E^{comp}_u+E^{com}_u\right\}.
\end{align}


\vspace{-0.4cm}
\subsection{Problem Formulation}
\vspace{-0.22cm}
Given the ATIN and its FL time and energy components, our next step is to formulate the energy consumption minimization problem that involves the joint optimization of two sub-problems, namely UAV-LOS UD clustering and D2D scheduling sub-problems.  To minimize the energy consumption at each global iteration, we need to develop a framework that decides: i) the UAV-UD clustering; ii) the adopted transmission rate of the UDs $\mathcal U_{los}$  to transmit their local models to the set of UAVs/RRBs; and iii)  the set of D2D transmitters (relays) that helping the non-LOS UDs to transmit their local models to the set of UAVs $\mathcal K$. As such, the local models are delivered to all UAVs with minimum duration time, thus minimum energy consumption for UAV's hovering and UD's wireless transmission. Therefore, the energy consumption minimization problem in
the ATIN can be formulated as follows.
\begin{subequations} 
	\begin{align} \nonumber 
	& \text{P0:} 
	\min_{\substack{R^u_{min}, \mathcal U_{los}, \mathcal U_{non}}}  \sum_{k\in \mathcal K}E_k + \sum_{u\in \mathcal U}E_u\\
	&\rm s.t.
	\begin{cases}  \nonumber
	\text{C1:}\hspace{0.2cm} \mathcal U_{k,los}\cap \mathcal U_{k',los}=\emptyset, \forall (k, k')\in \mathcal K, \\
	\text{C2:}\hspace{0.2cm} \mathcal U_{k,los}\cap \mathcal U_{u',los}=\emptyset, \forall k\in \mathcal K,  \\ 
	\text{C3:}\hspace{0.2cm} \mathcal U_{u,los}\cap \mathcal U_{u'',los}=\emptyset, \\
	\text{C4:}\hspace{0.2cm} R^u_{k,b} \geq R_0, (u,k,b)\in (\mathcal U, \mathcal K, \mathcal B),\\
	 \text{C5:}\hspace{0.2cm}  \mathtt T_{\hat{u}} \leq T_u, u \in \mathcal U_{los},\\
	 \text{C6:}\hspace{0.2cm}  T^{\bar{u}}_{idle}\geq (\frac{s}{R^{\bar{u}}_{\hat{u}}}+\frac{s}{R^{\bar{u}}_{k,b}}), \bar{u} \in \mathcal U_{los}, \hat{u} \in \mathcal U_{non}, \\
   \text{C7:}\hspace{0.2cm}  \tau\leq T_\text{max}. 
		\end{cases}
	\end{align}
\end{subequations}
The constraints are explained as follows. Constraint C1 states that the set of scheduled UDs to the UAVs are disjoint, i.e., each UD
must be scheduled to only one UAV. Constraints C2 and C3 make sure that each
UD can be scheduled to only one relay
and no user can be scheduled to a relay and UAV
at the same time instant. Constraint C4 is on the coverage threshold of each UAV. Constraint C5 ensures that the local parameters of UD $\hat{u}$  has to be delivered to UAV $k$ via relay $\bar{u}$ within $\frac{s}{R^u_{min}}$, i.e., $\mathtt T_{\hat{u}}=\frac{s}{R^{\bar{u}}_{\hat{u}}}+\frac{s}{R^{\bar{u}}_{k,b}} \leq \frac{s}{R^u_{min}}$. Constraint C6 ensures that the idle time of UD $\bar{u}$ is long enough for transmitting the local parameters of UD $\hat{u}$ to UAV $k$. Constraint C7 is for the FL time threshold $T_{max}$. We can readily show that problem P0 is NP-hard. However, by analyzing the problem, we can decompose it into two sub-problems and solve them individually and efficiently.

\vspace{-0.4cm}
\subsection{Problem Decomposition}
\vspace{-0.25cm}
First, we focus on minimizing the energy consumption via efficient RRM scheduling of UDs $\mathcal U_{los}$ to the UAVs/RRBs. In particular, we can get the possible minimum transmission duration of UD $u \in \mathcal U_{los}$ by jointly optimizing the UD scheduling and rate adaptation  in $\mathcal U_{los}$. The mathematical formulation	for minimizing the energy consumption via minimizing the transmission durations for UDs-UAVs/RRBs 	transmissions can be expressed as
\begin{subequations} 
	\begin{align} \nonumber 
	& \text{P1:} 
	\min_{\substack{R^u_{min}, \mathcal U_{los}}}  \sum_{k\in \mathcal K}E_k + \sum_{u\in \mathcal U}E_u\\
	&\rm s.t.
	\begin{cases}  \nonumber
	(\text{C1}), \hspace{0.2cm} (\text{C4}), \hspace{0.2cm} (\text{C5}), \hspace{0.2cm} (\text{C7}).
	\end{cases}
	\end{align}
\end{subequations}
Note that this sub-problem contains UD-UAV/RRB scheduling and an efficient solution will be developed in Section IV-B.

After obtaining the possible transmission duration from
UD-UAV transmissions, denoted by $T_u$ of the $u$-th UD ($u\in \mathcal U_{los}$), by solving P1, we can now formulate the second sub-problem.  In particular, we can minimize the energy consumption of non-LOS UDs
$\mathcal U_{non}$ that are not been scheduled to the UAVs within $T_u$ by using D2D communications via relaying mode.  For this, UDs being scheduled to the UAVs from sub-problem P1 can be exploited to work as relays and schedule non-LOS UDs on D2D links within their idle
times. Therefore, the second sub-problem of
minimizing the energy consumption of unscheduled UDs to be scheduled on D2D
links via relaying mode can be expressed as P2 as follows 
\begin{subequations} 
	\begin{align} \nonumber 
	& \text{P2:} 
	\min_{\substack{\mathcal U_{non}}} \sum_{k\in \mathcal K}E_k + \sum_{u\in \mathcal U}E_u\\
	&\rm s.t.
	\begin{cases}  \nonumber
	(\text{C2}), \hspace{0.2cm} (\text{C3}),\hspace{0.2cm}
	(\text{C5}), \hspace{0.2cm} (\text{C6}),
	\text{C8:} \hspace{0.2cm} \mathcal N_{non}\in \mathcal P(\mathcal U \backslash \mathcal N_{los}). \\
	\end{cases}
	\end{align}
\end{subequations}
Constraint C8 states that the set of relays
is constrained only on the UDs that are not been scheduled to the UAVs. 
It can be easily observed that P2 is a D2D scheduling problem that considers selection of relays and their non-LOS scheduled UDs.

\vspace{-0.66cm}
\section{FedMoD: Proposed Solution}

\subsection{Solution to Subproblem P1: UAV-UD Clustering}\label{CCC}
Let $\mathcal{A}$ denote the set of all possible associations between UAVs, RRBs, and LOS UDs, i.e., $\mathcal{A}=\mathcal{K}\times\mathcal{Z}\times \mathcal{U}_{los}$. For instance, one possible association $a$ in $\mathcal A$ is $(k,z,u)$ which represents UAV $k$, RRB $z$, and  UD $u$. Let the conflict clustering graph  in the network is denoted by $\mathcal{G}(\mathcal{V},\mathcal{E})$ wherein $\mathcal{V}$ and $\mathcal{E}$ are the sets of vertices and edges of $\mathcal{G}$, respectively. A typical vertex in $\mathcal{G}$ represents an association in $\mathcal A$, and each edge between two different vertices represents a conflict connection between
the two corresponding associations of the vertices according to C1 in $\text{P1}$. Therefore, we construct the conflict
	clustering graph by generating a vertex $v \in \mathcal{V}$ associated with $a\in \mathcal{A}$ for UDs who have enough energy for performing learning and wireless transmissions. To select the UD-UAV/RRB scheduling that provides a minimum energy consumption while ensures C4 and C7 in $\text{P1}$, a weight $w(v)$ is assigned to each vertex $v \in \mathcal V$. For simplicity, we define the weight of vertex $v^{z}_{k,u}$ as $w(v^{z}_{k,u})=E^{comp}_u+E^{com}_u$.  Vertices $v^{z}_{k,u}$ and $v^{z'}_{k',u'}$ are conflicting vertices that will be connected by an edge in $\mathcal{E}$ if one of the below \textbf{connectivity conditions (CC)} is satisfied:
\begin{itemize}
	\item \textbf{CC1:} ($u=u'$ and $z=z'$ or $k=k'$). \textbf{CC1} states that  the same  user $u$ is in both vertices $v^{z}_{k,u}$ and $v^{z'}_{k',u'}$.
	\item \textbf{CC2:} ($z=z'$ and $u \neq u'$). \textbf{CC2} implies that the same RRB is in both vertices $v^{z}_{k,u}$ and $v^{z'}_{k,u'}$. 
\end{itemize}
Clearly, \textbf{CC1} and \textbf{CC2} correspond to a violation of the constraint C1 of  $\text{P1}$ where two vertices are conflicting if: (i) the UD
is associated with two different UAVs and(or) two different RRBs; or (ii) the RRB is associated with more than one UD. With the designed conflict clustering graph, $\text P1$ is similar	to MWIS problems in several aspects. In MWIS, two		vertices should be non-adjacent in the graph (conditions \textbf{CC1} and \textbf{CC2} in \sref{CCC}), and similarly, in $\text{P1}$, same local learning user cannot be scheduled to	two different UAVs or two different RRBs (i.e., C1). Moreover, the objective of	problem $\text{P1}$ is to minimize the energy consumption, and similarly, the goal of MWIS is to select a		number of vertices that have small weights. Therefore, the following theorem characterizes the
solution to the energy consumption minimization problem P2 in an ATIN.\\
\textbf{Theorem 2.} 
\textit{The  solution to problem P1 is equivalent to the minimum independent set weighting-search method, in which the weight of each
	vertex $v$ corresponding to UD $u$ is
	\begin{align} \label{eqweight} 
	w(v)=E^{comp}_u+E^{com}_u.
	\end{align}	}
Finding the minimum weight independent set $\Gamma^*$  among all other minimal sets  in $\mathcal G$ graph is explained as follows. First, we select vertex $v_i\in \mathcal V, (i=1,2,\cdots, )$  that has the minimum weight $w (v^*_i)$  and add it to $\Gamma^*$ (at this point $\Gamma^* = \{v^*_i\}$). Then,  the  subgraph $\mathcal G(\Gamma^*)$, which consists of vertices in graph  $\mathcal G$ that are not adjacent to vertex $v^*_i$, is extracted and considered for the next vertex selection process. Second, we select a new minimum weight  vertex $v^*_{i'}$ (i.e., $v^*_{i'}$ should be in the corresponding set of $v^*_i$) from subgraph $\mathcal G(\Gamma^*)$. Now, $\Gamma^* = \{v^*_i, v^*_{i'}\}$. We repeat this process until no further vertex is not adjacent  to all vertices in $\Gamma^*$. The selected $\mathtt C$ contains at most $ZK$ vertices. Essentially, any possible solution $\Gamma^*=\{v^*_1, v^*_2, \cdots, v^*_{ZK}\}$ to P1 represents a feasible UD-UAV/RRB scheduling.
\vspace{-0.33cm}
\subsection{Solution to Subproblem P2: D2D Graph Construction}\label{CCC}
In this subsection, our main focus is to schedule the non-LOS UDs to the LOS UDs (relays)  over their idle times so as the local models of those non-LOS UDs can be forwarded to the UAVs. Since non-LOS UDs  communicate with their respective relays over  D2D links, the D2D connectivity can be characterized by an undirected graph $\mathcal G(\mathcal V, \mathcal E)$ with $\mathcal V$ denoting the set of vertices and $\mathcal E$ the set of edges. We construct a new D2D conflict graph  that considers all possible conflicts for scheduling non-LOS UDs on D2D links, such as transmission and half-duplex conflicts. This leads to feasible transmissions from  the potential D2D transmitters $|\mathcal U_{\text{non,tra}}|$.

Recall $\mathcal U_{\text{non}}$ is the set of non-LOS UDs, i.e., $\mathcal U_{\text{non}}=\mathcal U\backslash \mathcal U_{los}$, and let $\mathcal U_{\text{relay}}= \mathcal U_{\text{los}}\backslash\{u\}$ denote the set of relays  that can use their idle times to help the non-LOS UDs. Hence, the D2D conflict graph is designed by generating all vertices for $\bar{u}$-th possible relay, $\forall \bar{u} \in \mathcal U_{\text{relay}}$. The vertex set $\mathcal V$ of the entire graph is the union of vertices of all users.
Consider, for now, generating the vertices  of the $\bar{u}$-th relay. Note that  $\bar{u}$-th relay can help one non-LOS UD as long as it is in the coverage zone is capable of delivering the local model to the scheduled UAV within its idle time. Therefore, each
vertex is generated for each single non-LoS UD that is located in the coverage zone of the $\bar{u}$-th relay and $\mathtt T_{\hat{u}} \leq T^{\bar{u}}_{idle}$. Accordingly, $i$-th non-LOS UD in the coverage zone $ \mathcal Z_{\bar{u}}$ can  transmit its model to the $\bar{u}$-th relay. Therefore, we generate $|\mathcal Z_{\bar{u}}|$ vertices for the $\bar{u}$-th relay.

All possible conflict connections  between vertices (conflict edges between circles) in the D2D conflict graph are provided as follows. Two vertices $v_{i}^{\bar{u}}$ and $v_{i'}^{u'}$ are adjacent   by a conflict edge in $\mathcal G_\text{d2d}$,  if one of the  following conflict conditions is true: (i)  ($\bar{u} \neq u^\prime$) and ($i=i^\prime$). The same non-LoS UD cannot be scheduled to two different helpers $\bar{u}$ and $u^\prime$. (ii) ($i \neq i^\prime$) and ($\bar{u}=u^\prime$). Tow different non-LoS UDs can not be scheduled to the same relay. These two conditions represent C3 in P2, where each non-LoS UD must be assigned to one relay and the same relay cannot accommodate more than one non-LoS UD.  Given the aforementioned designed D2D conflict graph, the following
theorem reformulates the subproblem P3.\\
\textbf{Theorem 3.} 
\textit{The subproblem of scheduling non-LOS UDs on D2D links in $P2$ is equivalently represented by the MWIS selection  among all the maximal sets
in the $\mathcal G_\text{d2d}$ graph, where the weight $\psi(v_{i}^{\bar{u}})$ of each vertex $v_{i}^{\bar{u}}$ is given by
$\psi(v_{i}^{\bar{u}}) =r.$}


\vspace{-0.4cm}
\section{Numerical Results}
\vspace{-0.25cm}
For our simulations,  a circular network area having a radius of $400$ meter (m) is considered. The height of the CPS is $10$ m \cite{zoheb}. Unless specified otherwise, we divide the considered circular network area  into $5$ target locations. As mentioned in the system model, each target location is assigned to one UAV where the locations of the UAVs are randomly distributed in the flying plane with  altitude of $100$ m. The users are placed randomly in the area.  In addition, $U$ users are connected to the UAVs through orthogonal RRBs for uplink local model transmissions. The bandwidth of each RRB is $2$ MHz. The UAV communicates with the neighboring UAVs via high-speed mmWave communication links \cite{17, FL_Dec_1_2}. 

Our proposed FedMod scheme is evaluated on the MNIST and CIFAR-10 datasets, which are well-known benchmark datasets for image classification tasks. Each image is one of $10$ categories. We divide the
dataset into the UDs’ local data $\mathcal D_u$ with  non-i.i.d. data heterogeneity, where each local dataset contains datapoints from two of the $10$ labels. In each case, $\mathcal D_u$ is selected randomly from the full dataset of labels assigned to $u$-th UD. We also assume non-iid-clustering, where the maximum number of assigned classes for each cluster is $6$ classes. For ML models, we use a deep neural network with $3$ convolutional layers and $1$ fully connected layer. The total number of trainable parameters for MNIST is $9,098$ and for CIFAR-10 is $21,840$. We simulate an FedMod system with $30$ UDs (for CIFAR-10) and  $20$ UDs (for MNIST) and $5$ UAVs each with $7$ orthogonal RRBs. In our experiments, we consider a network topology that is illustrated in Fig. \ref{fig3} unless otherwise
specified.  The remaining simulation parameters are summarized in TABLE \ref{table_1} and selected based on \cite{FL_Lit_1_4, zoheb, Mahdi, TWC, 1new}. To showcase the effectiveness of FedMoD in terms of learning accuracy and energy consumption, we consider the \textit{Star-based FL} and  \textit{HFL} schemes.

\begin{table}[t!] \small
	\renewcommand{\arraystretch}{0.9}
	\caption{Simulation Parameters}
	\label{table_1}
	\centering
	\begin{tabular}{|p{5.2cm}| p{3.0cm}|}
		\hline
		
		\textbf{Parameter} & \textbf{Value}\\
		\hline 
		Carrier frequency, $f$ & $1$ GHz \cite{zoheb}\\
		\hline
		Speed of light, $c$  & $3\times 10^8$ m/s\\
		\hline
		Propagation parameters, $a$ and $b$ & $9.6$ and $0.28$ \cite{zoheb}\\
		\hline
		Attenuation factors, $\psi^{LoS}$ and $\psi^{nLoS}$ & $1$ dB and $20$ dB \cite{zoheb} \\
		\hline 
		UAV's and UD's transmit maximum powers, $P$, $p$ & $1$ and $3$ Watt \cite{zoheb}\\
		\hline
		Transmit power of the CPS & $5$ Watt\\
		\hline
		Noise PSD, $N_0$ & -$174$ dBm/Hz\\
		\hline 
		Local and aggregated parameters size, $s$ & $9.1$ KB\\
		\hline
		
		UD processing density, $C_u$ & $[400-600]$\\
		\hline
	   UD computation frequency, $f_{u}$ & $[0.0003-1]$ G cycles/s\\
		\hline
		CPU architecture based parameter, $\alpha$ & $10^{-28}$\\
		\hline
		FL time threshold $T_{max}$ & $1$ Second \\
		\hline
		Number of data samples, $S_u$ & $200$\\
		\hline
	\end{tabular}
	
\end{table}

We show the training accuracy with
respect to number of iterations for both the MNIST and CIFAR-10 datasets with different model dissemination rounds $\alpha$ in Fig. \ref{fig4}. Specially, in Figs. \ref{fig4}(a) and \ref{fig4}(b), we show the accuracy performance of our proposed FedMoD scheme with full dissemination against the centralized FL schemes.  Particularly, in the considered star-based and HFL schemes, the CPS can receive  the local trained models from the UDs, where each scheduled UD transmits its trained model directly to the CPS (in case of star-based FL) or through UAVs (in case of HFL). Thus, the CPS can aggregate all the local models of all the scheduled UDs. In the considered decentralized FedMoD, before the dissemination rounds starts, each UAV has aggregated the trained local models of the scheduled UDs in its cluster only. However, with the novel dissemination FedMoD method, each UAV shares its aggregated models with the neighboring UAVs using one hop transmission. Thus, at each dissemination round, UAVs build their side information (\textit{Known} models and \textit{Unknown} models) until they receive all the  \textit{Unknown} models.  Thus, the UAVs have full knowledge about the global model of the system at each global iteration. Thanks to the efficient FedMoD dissemination method, the accuracy of the proposed FedMoD scheme is almost the same as the centralized FL schemes. Such efficient communications among UAVs accelerate the learning progress, thereby FedMoD model reaches an accuracy of ($0.945$, $0.668$ for MNIST and CIFAR-10) with around $200$ and $300$ global iterations, respectively, as compared to the accuracy of ($0.955$, $0.944$ for MNIST) and ($0.665$, $0.668$ for CIFAR-10) for star-based and HFL schemes, respectively. It is important to note that although our proposed overcomes the struggling UD issue of the star-based scheme and the two-hop transmission of the HFL, it needs a few rounds of model dissemination. However, the effective coding scheme of the models minimizes the number of dissemination rounds. In addition, due to the high communication links between the UAVs, the dissemination delay is negligible which does not affect the FL time.

\begin{figure}[t!]
	\centering
	\begin{subfigure}[t]{0.23\textwidth}
		\centerline{\includegraphics[width=1.13\linewidth]{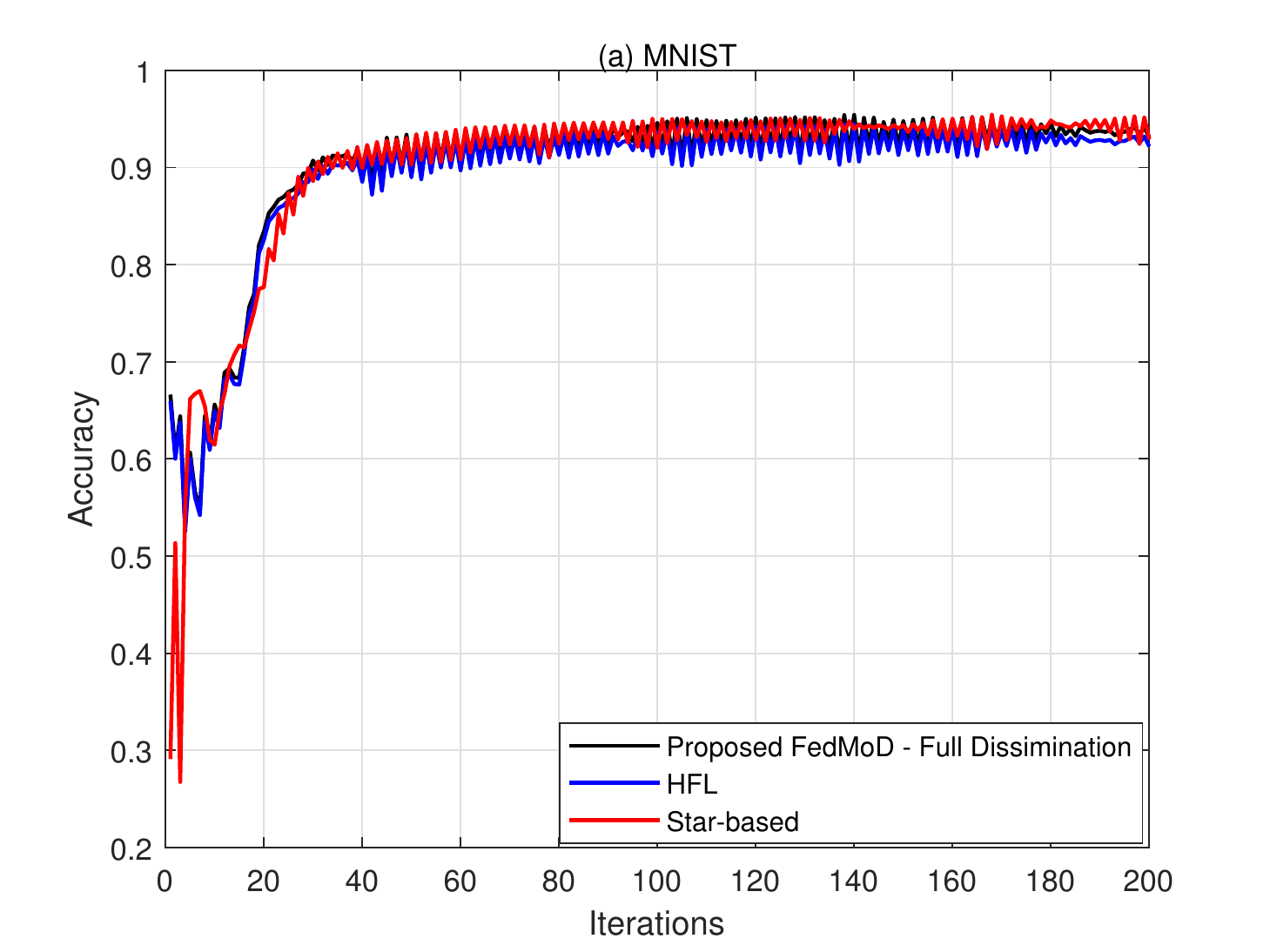}}
		\label{MCT}
	\end{subfigure}%
	~
	\begin{subfigure}[t]{0.23\textwidth}
		\centerline{\includegraphics[width=1.13\linewidth]{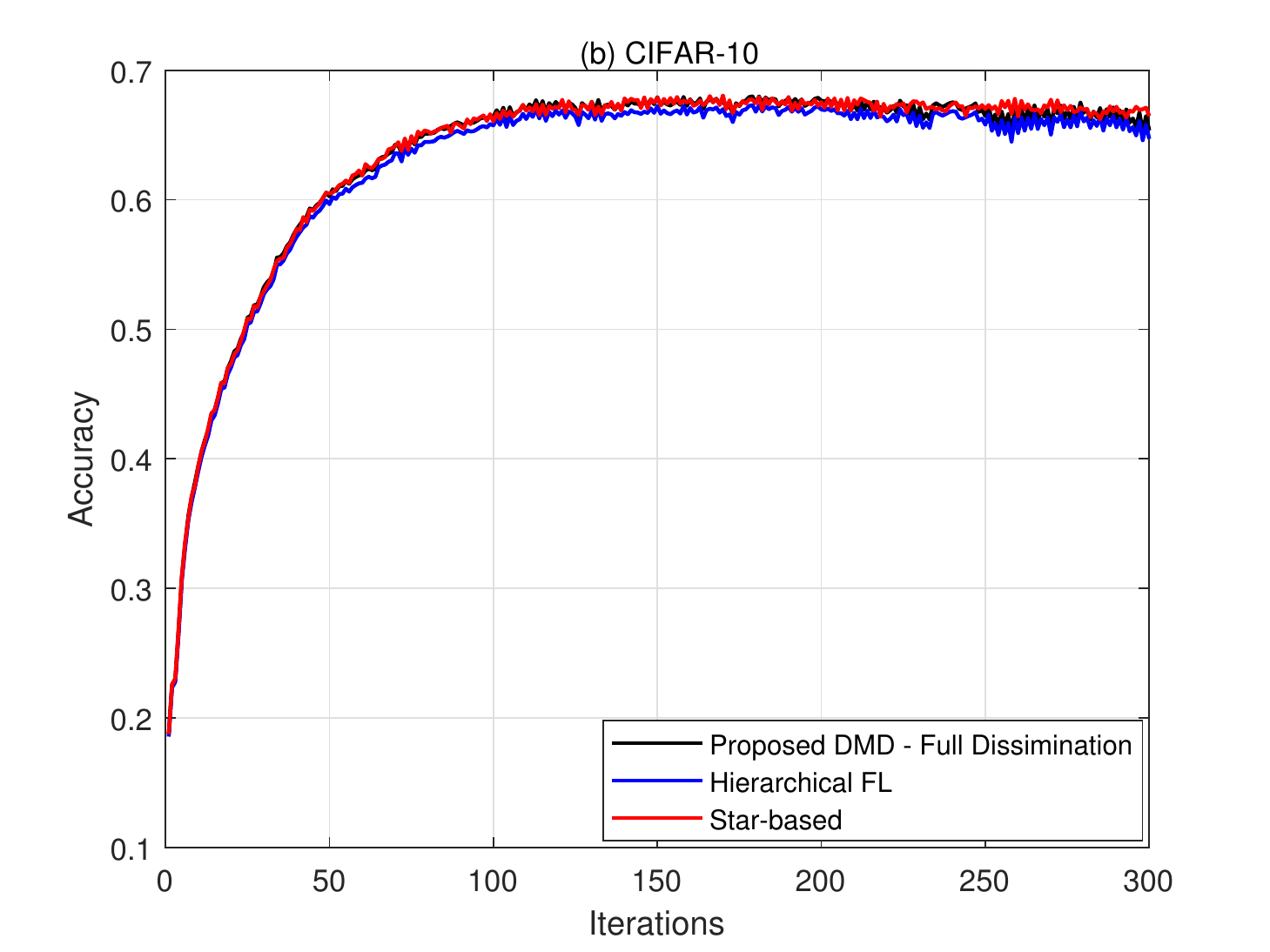}}
		\label{NCT}
	\end{subfigure}
	\caption{Performance comparison between FedMoD and baseline		schemes  for MNIST and CIFAR-10: Accuracy vs.  number of iterations.}
	\label{fig4}
\end{figure}

\begin{figure}[t!]
	\centering
	\begin{subfigure}[t]{0.23\textwidth}
		\centerline{\includegraphics[width=1.13\linewidth]{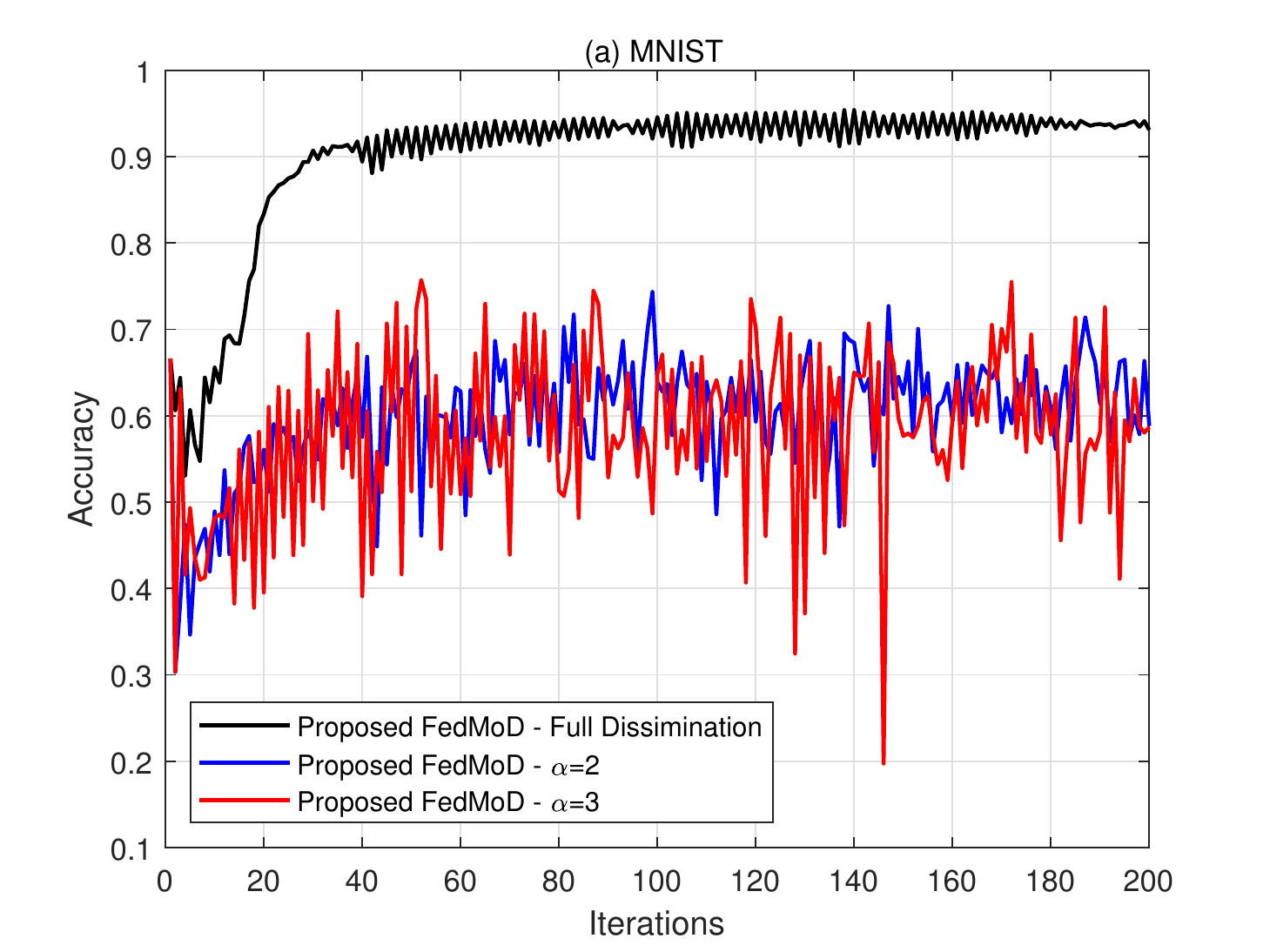}}
		\label{MCT}
	\end{subfigure}%
	~
	\begin{subfigure}[t]{0.23\textwidth}
		\centerline{\includegraphics[width=1.13\linewidth]{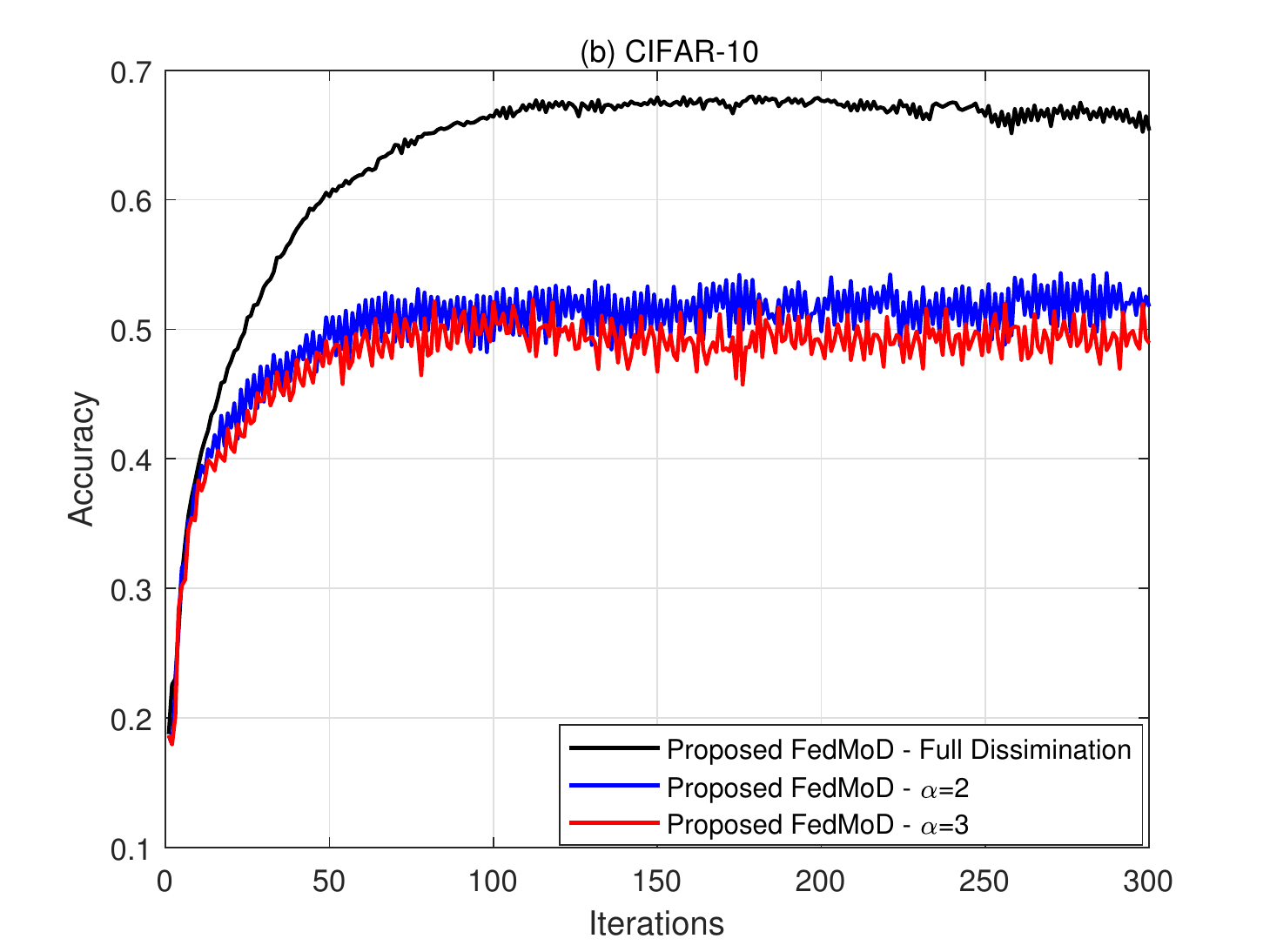}}
		\label{NCT}
	\end{subfigure}
	\caption{Performance comparison of FedMoD  for MNIST and CIFAR-10 with different $\alpha$.}
	\label{fig4a}
\end{figure}

In Figs. \ref{fig4a}(a) and \ref{fig4a}(b), we further study the impact of the number of dissemination rounds $\alpha$ on the convergence rate of the proposed FedMoD scheme for both the MNIST and CIFAR-10 datasets. For both figures, we consider the following three proposed schemes: (i) FedMoD scheme-full dissemination where UAVs perform full model dissemination at each global iteration,  (ii) FedMoD scheme - $\alpha=2$ where partially dissemination is performed and after each $2$ complete global iterations, we perform full dissemination, and (iii) FedMoD scheme - $\alpha=3$ where partially dissemination is performed and after each $3$ complete global iterations, a full dissemination is performed. From Figs. \ref{fig4a}(a) and \ref{fig4a}(b), we observe that a partial dissemination with less frequent full dissemination leads to a lower training accuracy within a given number of training iterations. Specifically, the accuracy performance for full dissemination, $\alpha=2$ and $3$ schemes is $0.966, 0.66, 0.75$ for MNIST and $0.668, 0.52, 0.59$ for CIFAR-10, respectively. Infrequent inter-cluster UAV dissemination also leads to un-stable convergence since the UAVs do not frequently aggregate all the local trained models of the UDs.

\begin{figure}[t!]
	\centering
	\begin{subfigure}[t]{0.23\textwidth}
		\centerline{\includegraphics[width=1.13\linewidth]{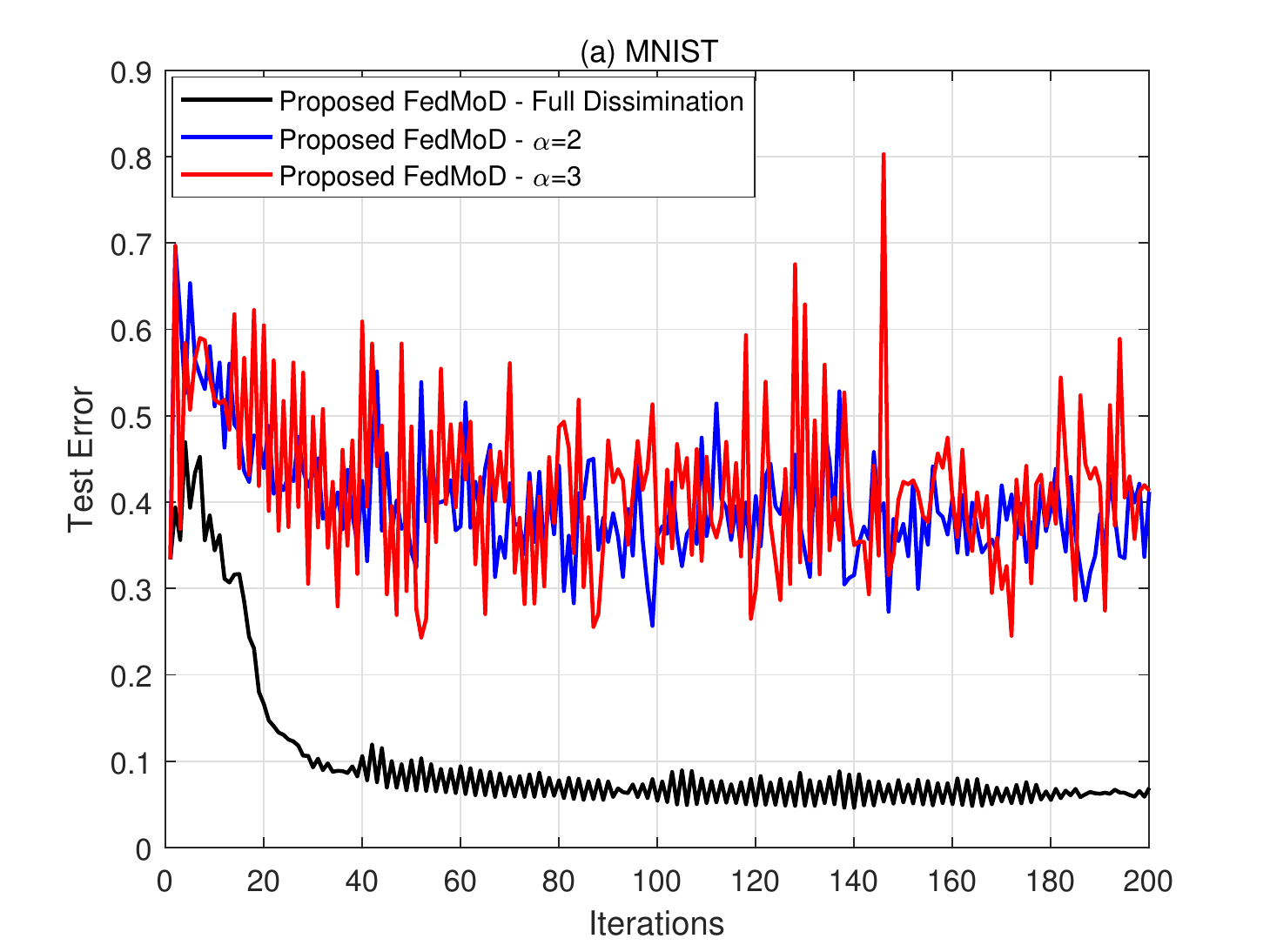}}
		\label{MCT}
	\end{subfigure}%
	~
	\begin{subfigure}[t]{0.23\textwidth}
		\centerline{\includegraphics[width=1.13\linewidth]{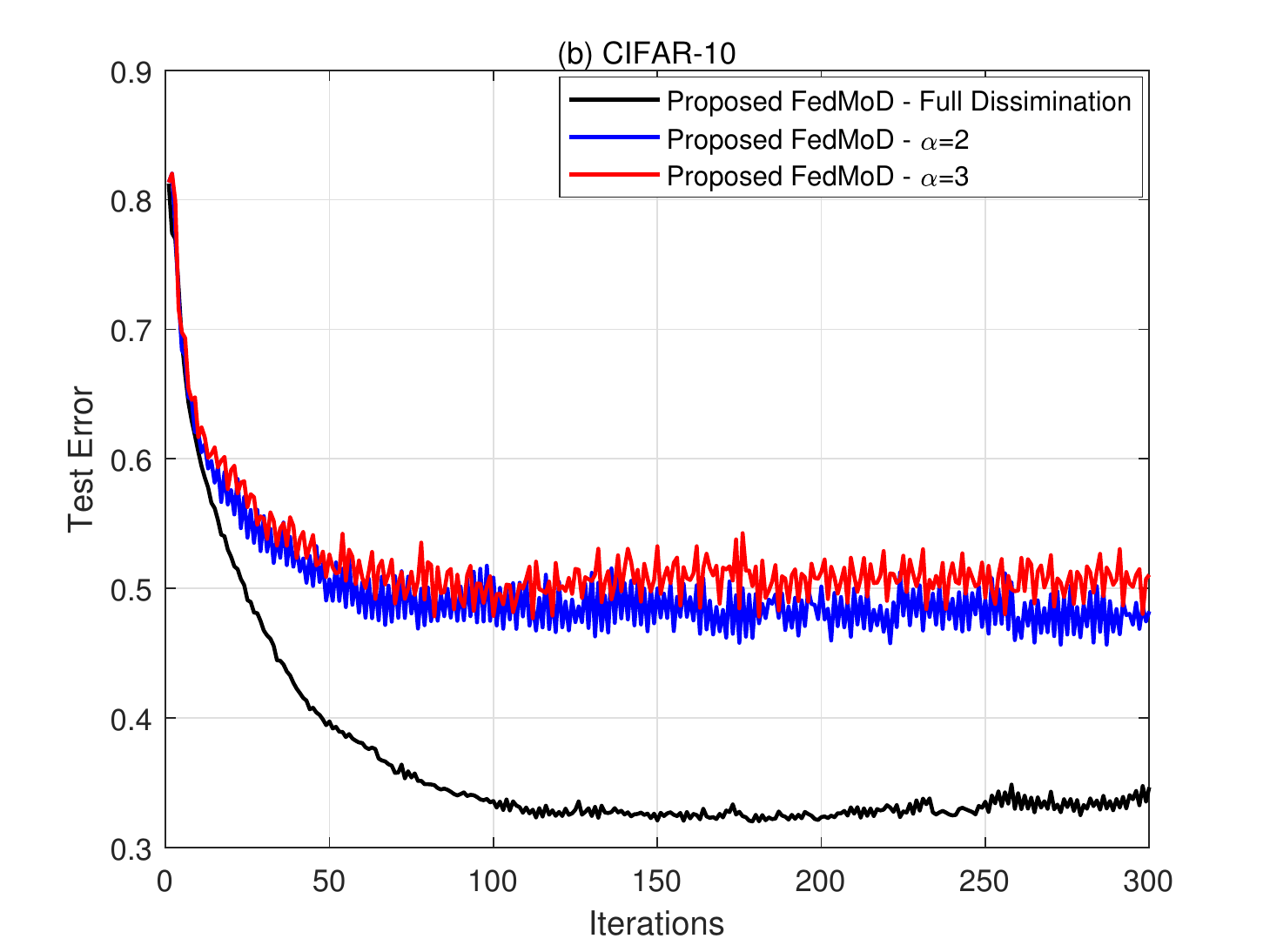}}
		\label{NCT}
	\end{subfigure}
	\caption{Test error of FedMoD  for MNIST and CIFAR-10 with different $\alpha$.}
	\label{fig5}
\end{figure}

In Figs. \ref{fig5}(a) and  \ref{fig5}(b),  we show the test error with respect to number of iterations for both the MNIST and CIFAR-10 datasets with different model dissemination rounds $\alpha$. It can be observed that, the test error of the proposed FedMod model drops rapidly at the early stage of the training process, and converges at around $160$ iterations. On the other hand, the training progress of FedMoD model with infrequent full model dissemination (i.e., $\alpha=2$, $\alpha=3$) lags far behind due to the insufficient model averaging of the UAVs, which is due to infrequent full communications among them. As result, both schemes do not converge and suffer higher testing loss compared to full dissemination of the FedMoD scheme since the communication among edge servers in the full dissemination is more efficient and thus accelerates the learning progress.

\begin{figure}[t!]
	\centering
	\begin{subfigure}[t]{0.23\textwidth}
		\centerline{\includegraphics[width=0.54\linewidth]{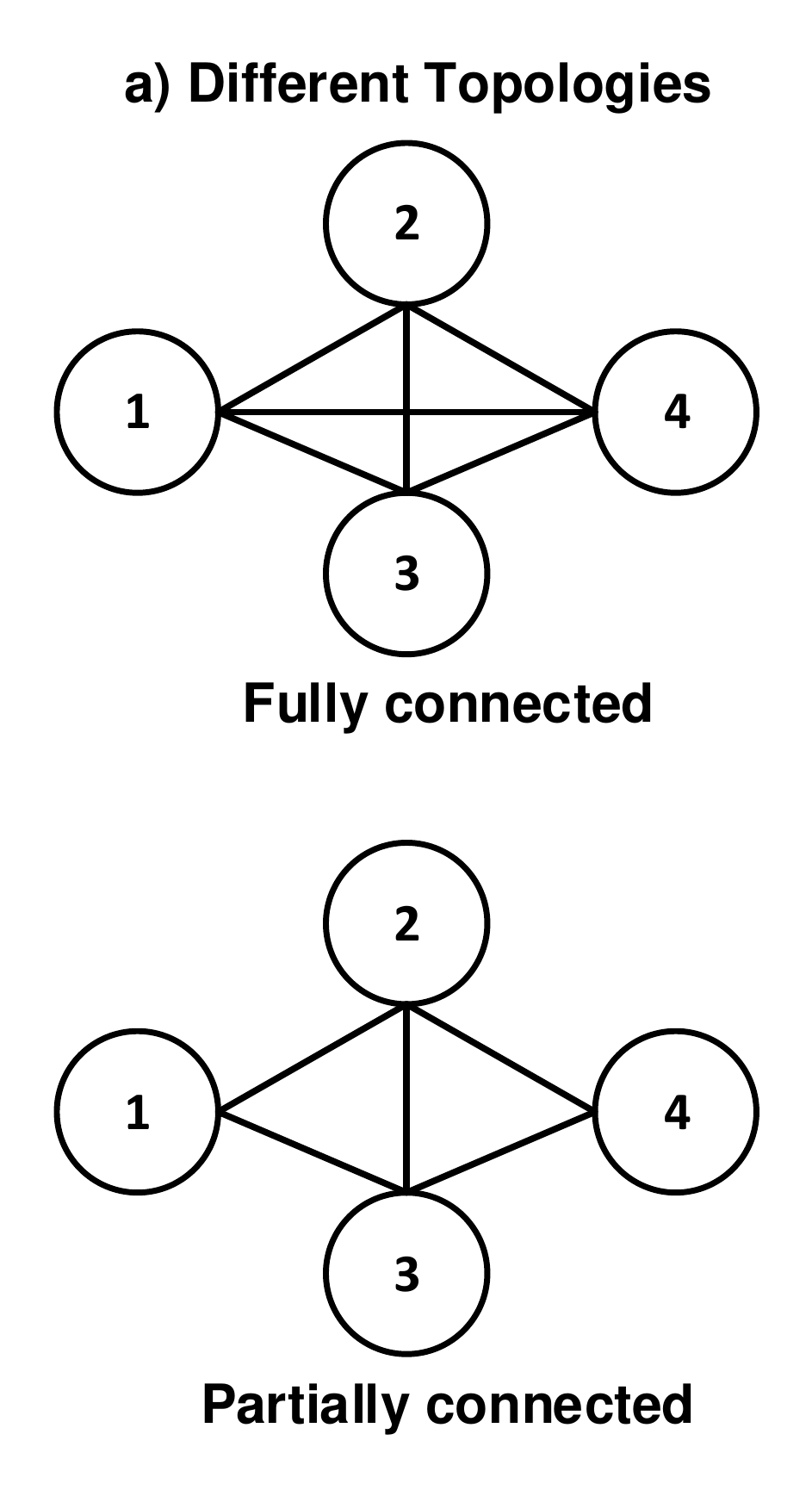}}
		\label{MCT}
	\end{subfigure}%
	~
	\begin{subfigure}[t]{0.23\textwidth}
		\centerline{\includegraphics[width=1.23\linewidth]{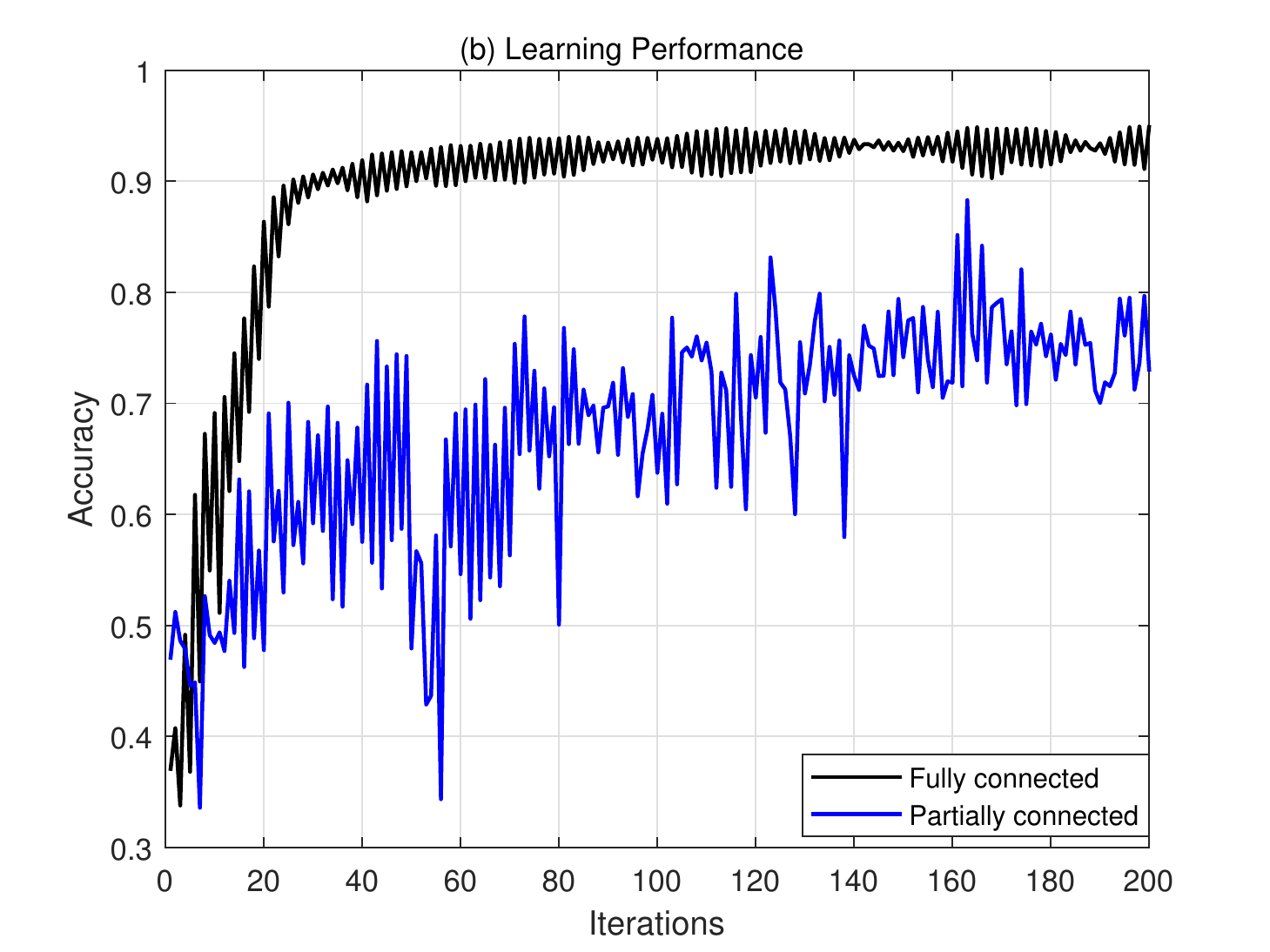}}
		\label{NCT}
	\end{subfigure}
	\caption{Typical network topologies of the UAVs for model dissemination and their FL accuracy.}
	\label{fig6}
\end{figure}

We also evaluate the learning accuracy of FedMoD on different
network topologies of the UAVs as shown in Fig. \ref{fig6}(a). We consider fully connected network where all the UAVs are connected and partially connected network where UAV $4$ is not connected to UAV $1$. In this figure, we perform $4$ different rounds of dissemination. As shown in Fig. \ref{fig6}(b),
we see that within a given number of global iterations, a more
connected network topology achieves a higher test accuracy. This is because more model information is collected from neighboring UAVs in each round of inter-UAV model aggregation. It is also observed that when $\alpha$ is greater
 than $4$, the test accuracy of the partially connected network can approach the case with a fully-connected
 network. Therefore, based on the network topology of
 UAVs, we can choose a suitable value of $\alpha$ to balance between the number of inter-cluster UAV aggregation and learning performance.

	\begin{figure}[t!]
		\centering
		
			\centerline{\includegraphics[width=0.55\linewidth]{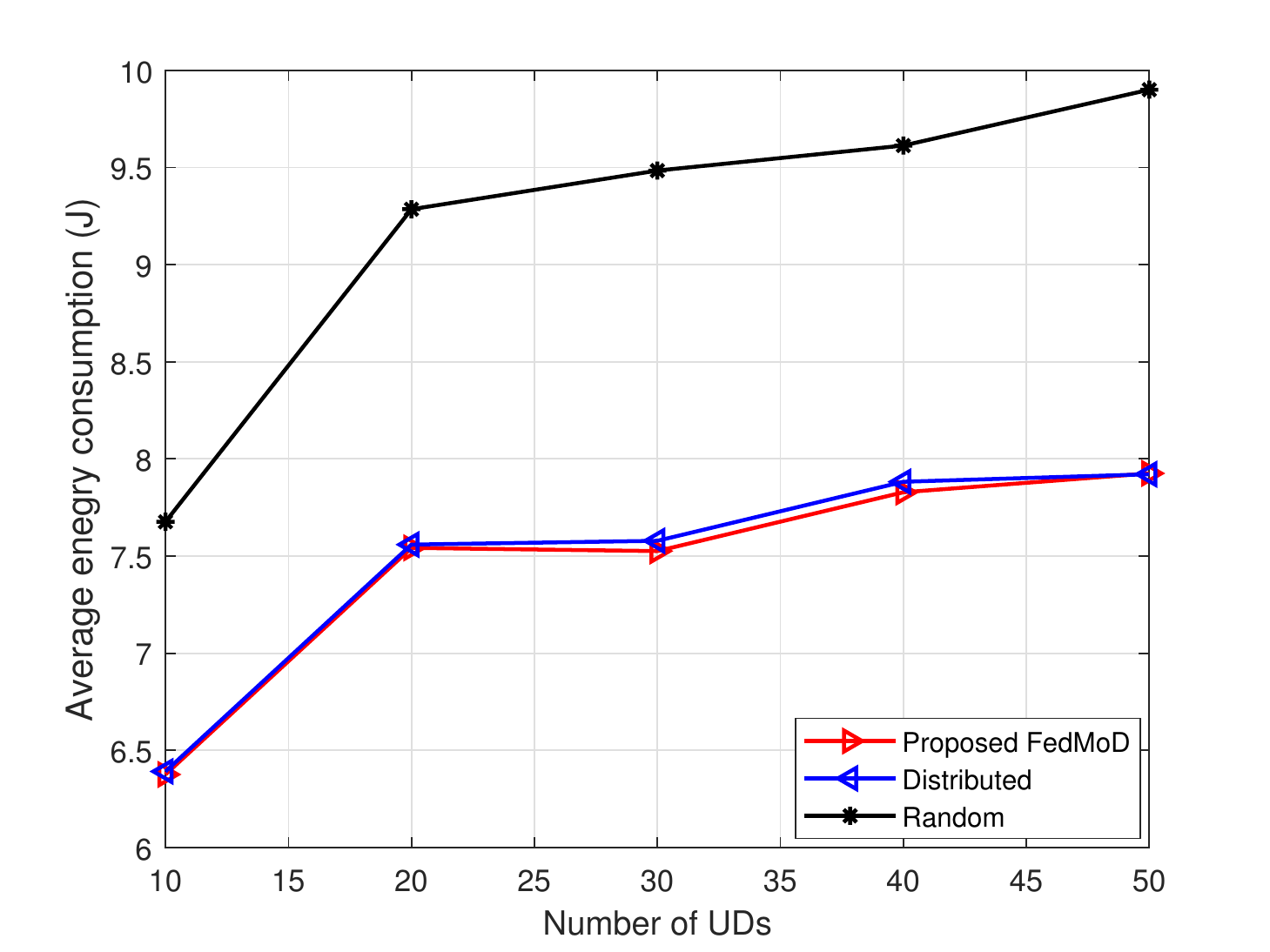}}
	
					\caption{Average energy consumption
vs. number of UDs.}
						\label{fig7}
\end{figure}

In Fig. \ref{fig7}, we plot the energy consumption of the proposed and benchmark schemes versus the number of UDs for a network of $4$ UAVs and $4$ RRBs per UAV. From the objective function of problem P1, we can observe that an efficient radio resource management scheme leads to
a lower energy consumption. Hence, from Fig.  \ref{fig7}, we observe that for FedMoD, the average energy consumption
is minimized. Such an observation is because the proposed schemes judiciously allocate LOS UDs to the UAVs and their available RRBs as well as D2D communications. In particular,  the random scheme has the largest energy consumption because it randomly schedules the UDs to the UAVs and their available RRBs. Accordingly, from
 energy consumption perspective, it is inefficient to consider a random radio resource management scheme. From Fig. \ref{fig7}, it is observed that the proposed centralized scheduling FedMoD and distributed FedMoD schemes offer the same energy consumption performances for the same number of UDs.  Such an observation can be explained by the following argument. When we have a large number of UDs, the probability that a UD is scheduled to more than one  UAV decreases. As a result,  the conflict among UAVs and the likelihood of scheduling UDs to the wrong UAV decreases.  As an insight from this figure,   a distributed radio resource management scheme  is a suitable alternative for scheduling the LOS UDs to the UAVs,  especially for large-scale networks.


\section{Conclusion}
In this paper, we developed a novel decentralized FL scheme, called FedMoD, which maintains convergence speed and reduces  energy consumption of FL in mmWave ATINs. Specifically, we proposed a FedMoD scheme based on inter-cluster UAV communications, and theoretically proved its convergence. A rate-adaptive and D2D-assisted RRM scheme was also developed to minimize the overall energy consumption of the proposed decentralized FL scheme. The presented simulation results revealed that our proposed FedMoD achieves the same accuracy as the baseline FL scheme while substantially reducing energy consumption for convergence. In addition, simulation results reveal various insights concerning how the topology of the network impacts the number of inter-cluster UAV aggregations required for the convergence of FedMoD.

\begin{thebibliography}{10}

\bibitem{UAV_economy}
L. Godage, ``Global Unmanned Aerial Vehicle Market (UAV) Industry Analysis and Forecast (2018--2026),'' Mont. Ledger Boston, MA, USA, 2019.

\bibitem{Zoheb_IoD}
M. Z. Hassan \textit{et al.}, ``Interference management in cellular-connected Internet-of-drones network with drone clustering and uplink rate-splitting multiple access,'' \textit{IEEE Internet Things of Journal}, vol. 9, no. 1, pp. 16060-16079, Sept. 2022.

\bibitem{Rui_Zhang_UAV}
S. Zhang \textit{et al.}, ``Cellular-enabled UAV communication: A connectivity-constrained trajectory optimization perspective,''
\textit{IEEE Trans. Commun.}, vol. 67, no. 3, pp. 2580–2604, Mar. 2019.

\bibitem{3GPP}
\textit{Technical Speciﬁcation Group Services and System Aspects; Unmanned Aerial System (UAS) Support in 3GPP (Release 17)}, Standard 3GPP TS 22.125, 3rd Generation Partnership Project, Dec. 2019. [Online] Available: https://www.3gpp.org/ftp/Specs/archive/22 series/22.125/.

\bibitem{UAV_mmWave}
Z. Xiao \textit{et al.}, ``Enabling UAV cellular with millimeter-wave communication: Potentials and approaches,'' \textit{IEEE Commun. Mag.}, vol. 54, no. 5, pp. 66–73, May 2016.

\bibitem{FL_Google}
B. McMahan \textit{et al.}, ``Communication-efﬁcient learning of deep networks from decentralized
data,'' \textit{20th International Conference on Artificial Intelligence and Statistics}, pp. 1273-1282, 2017.

\bibitem{FL_Reed}
S. Niknam \textit{et al.}, ``Federated learning for wireless communications: Motivation, opportunities, and challenges,'' \textit{IEEE Commun. Mag.}, vol. 58, no. 6, pp. 46–51, Jun. 2020.

\bibitem{UAV_FL_1}
Y. Qu \textit{et al.}, ``Decentralized federated learning for UAV networks: Architecture, challenges,
and opportunities,''\textit{IEEE Netw.}, vol. 35, no. 6, pp. 156-162, Nov. 2021.

\bibitem{UAV_FL_2}
P. Chhikara \textit{et al.}, ``Federated learning and autonomous uavs for hazardous zone detection and AQI prediction in IoT environment,''  \textit{IEEE Internet of Things J.}, vol. 8, no. 20, pp. 15456-15467, Oct. 2021.

\bibitem{UAV_FL_3}
W. Y. B. Lim \textit{et al.}, ``UAV-assisted communication efficient federated learning in the era of the artificial intelligence of things,'' \textit{IEEE Netw.}, vol. 35, no. 5, pp. 188–195, Oct. 2021.



\bibitem{FL_Lit_1_1}
C. B. Issaid \textit{et al.}, ``Local stochastic ADMM for communication-efﬁcient distributed learning,'' \textit{IEEE Wireless Commun. and Netw. Conf. (WCNC)}, Austin, Tx, 2022.


\bibitem{FL_Lit_1_3}

 M. M. Wadu \textit{et al.}, ``Joint client scheduling and resource allocation under channel uncertainty in federated learning,'' \textit{IEEE Trans. Commun.}, vol. 69, no. 9, pp. 5962–5974, Sep. 2021.
 
 \bibitem{FL_Lit_1_4}
 M. Chen \textit{et al.}, ``A joint learning and communications framework for
federated learning over wireless networks,'' \textit{IEEE Trans. on Wireless Commun.} vol. 20, no. 1, pp. 269-283, Jan. 2021.

\bibitem{FL_Lit_1_5}
M. Chen \textit{et al.} ‘‘Convergence time optimization for federated learning over wireless networks,'' \textit{IEEE Trans. Wireless Commun.}, vol. 20, no. 4, pp. 2457 - 2471, Apr. 2021.

\bibitem{FL_Lit_1_6}
S. Wang \textit{et al.}, ``Adaptive federated learning in resource constrained
edge computing systems,'' \textit{IEEE J. Sel. Areas Commun.}, vol. 37, no. 6,
pp. 1205-1221, Jun. 2019.

\bibitem{FL_Lit_1_7}
Z. Yang \textit{et al.}, ``Energy efﬁcient federated learning over wireless
communication networks,'' \textit{IEEE Trans. on Wireless Commun.}, vol. 20,
no. 3, pp. 1935-1949, Mar. 2021.

\bibitem{FL_Lit_1_8}
J. Yao and N. Ansari, ``Enhancing federated learning in fog-aided IoT by CPU frequency and wireless power control,'' \textit{IEEE Int. of Things J.}, vol. 8, no. 5, pp. 3438-3445, Mar. 2021.

\bibitem{FL_Lit_1_10}
M. S. Al-Abiad \textit{et al.},```Energy efﬁcient resource allocation for federated learning in NOMA enabled and relay-assisted internet of things networks'', \textit{IEEE Int. of Things J.} (accepted for publication).

\bibitem{HFL_1_1}
M. S. H. Abad \textit{et al.}, ``Hierarchical federated learning across heterogeneous cellular networks,” [Online]. Available: https://arxiv.org/pdf/1909.02362.

\bibitem{HFL_1_2}
L. Liu \textit{et al.} ``Client-edge-cloud hierarchical federated learning,” {IEEE Int. Conf. Commun.}, Dublin, 2020.

\bibitem{HFL_1_3}
S. Hosseinalipour \textit{et al.}, ``From federated to fog learning: Distributed machine learning over
heterogeneous wireless networks,'' \textit{IEEE Commun. Mag.}, vol. 58, no.
12, pp. 41-47, Dec. 2020.

\bibitem{HFL_1_4}
S. Luo \textit{et al.} ``HFEL: Joint edge association and resource allocation for cost-efﬁcient hierarchical federated edge learning,'' \textit{IEEE Trans. Wireless Commun.}, vol. 19, no. 10, pp.
6535-6548, Oct. 2020.

\bibitem{HFL_1_5}
S. Liu \textit{et al.}, ``Joint user association and resource allocation for wireless hierarchical federated learning with IID and non-IID data,'' \textit{IEEE Trans. Wireless Commun.} (accepted for publication).

\bibitem{FL_Dec_1_1}
F. P. -C. Lin \textit{et al.}, ``Semi-decentralized federated learning with cooperative
D2D local model aggregations,'' \textit{IEEE J. Sel. Areas Commun.}, vol.
39, no. 12, pp. 3851-3869, Dec. 2021.

\bibitem{FL_Dec_1_2}
Y. Sun \textit{et al.}, ``Semi-decentralized
federated edge learning for fast convergence on non-IID data,” [Online].
Available: https://arxiv.org/pdf/2104.12678.pdf.

\bibitem{FL_Dec_1_3}
 Y. Sun \textit{et al.}, “Semi-decentralized federated edge learning
with data and device heterogeneity,” [Online]. Available:
https://arxiv.org/pdf/2112.10313.pdf.

\bibitem{FL_Dec_1_4}
M. S. Al-Abiad \textit{et al.}, ``Coordinated scheduling and decentralized federated learning using conﬂict clustering graphs in fog-assisted IoD networks,'' \textit{IEEE Trans. Veh. Technol.} (accepted for publications).

\bibitem{RRB1} 
M. S. Al-Abiad, M. J. Hossain, and S. Sorour, ``Cross-layer cloud offloading with quality of service guarantees in Fog-RANs," in \emph{IEEE Trans. on Commun.,} vol. 67, no. 12, pp. 8435-8449, Dec. 2019.

\bibitem{AhmedH}
A. Douik \textit{etl}, "Distributed hybrid scheduling in multi-cloud networks using conflict graphs," in \emph{IEEE Trans. on Commun.,} vol. 66, no. 1, pp. 209-224, Jan. 2018.

\bibitem{elev}
J. Sabzehali \textit{etl}, ``3D placement and orientation of mmWave-Based UAVs for guaranteed loS coverage," in  \emph{IEEE Wireless Commun. Letters,} vol. 10, no. 8, pp. 1662-1666, Aug. 2021.

\bibitem{17}
N. Cherif \textit{etl}, ``Downlink coverage and rate analysis of an aerial user in vertical heterogeneous networks (VHetNets)," in \emph{IEEE Trans. on Wireless Commu.,} vol. 20, no. 3, pp. 1501-1516, Mar. 2021.

\bibitem{18}
M. Mozaffari \textit{etl}, ``Efficient deployment of multiple unmanned aerial vehicles for optimal wireless coverage,” \emph{IEEE Commun. Lett.,} vol. 20, no. 8, pp. 1647–1650, Aug. 2016.

\bibitem{19} 
V. V. Chetlur \textit{etl}, ``Coverage and rate analysis of downlink cellular Vehicle-to-Everything (C-V2X) communication,” in \emph{IEEE Trans. Wireless Commun.,} vol. 19, no. 3, pp. 1738–1753, Mar. 2020.

\bibitem{TWC}
M. S. Al-Abiad, M. Z. Hassan, and M. J. Hossain, "Task offloading optimization in NOMA-enabled dual-hop mobile edge computing system using conflict graph," in \textit{IEEE Trans. on Wireless Commun.,} 2022, doi: 10.1109/TWC.2022.

\bibitem{Access}
M. S. Al-Abiad \textit{et al.}, ``Cross-layer network codes for completion time minimization in device-to-device networks," in \emph{IEEE Access,} vol. 10, pp. 61567-61584, 2022.

\bibitem{saif_new}
M. S. Al-Abiad \textit{et al.}, ``Decentralized aggregation for energy-efficient federated learning via overlapped clustering and D2D communications,"  \emph{TechRxiv. Preprint.} https://doi.org/10.36227/techrxiv.19740394.v1, May 2022.


\bibitem{FL3}
M. S. Al-Abiad \textit{etl}, ``Energy efficient resource allocation for federated learning in NOMA enabled and relay-assisted internet of things networks”, in \emph{IEEE Int. of Things Journal,} 2022.




\bibitem{CPU2} 
A. P. Chandrakasan, S. Sheng, and R.W. Brodersen, ``Low-power CMOS digital design,” \emph{IEEE J. Solid-State Circuits,} vol. 27, no. 4, pp. 473-484, Apr. 1992.

\bibitem{CPU3} 
T. D. Burd and R. W. Brodersen, ``Processor design for portable
systems,” \emph{J. VLSI Signal Process Syst. Signal Image Video Technol.,} vol. 13, no. 2, pp. 203-221, Aug. 1996.


\bibitem{zoheb}
M. Z. Hassan, G. Kaddoum and O. Akhrif, ``Interference management in cellular-connected internet-of-drones networks with drone-pairing and uplink rate-splitting multiple access," in \emph{IEEE Internet Things J.,} Apr. 2022.

\bibitem{1new} 
M. B. Ghorbel \textit{et al.}, "Joint position and travel path optimization for energy efficient wireless data gathering using unmanned aerial vehicles," in \emph{IEEE Trans. Veh. Technol.,} vol. 68, no. 3, pp. 2165-2175, Mar. 2019.

\bibitem{Mahdi}
H. Ghazzai \textit{et al.}, ``Energy-efficient management of unmanned aerial vehicles for underlay cognitive radio systems,” \textit{IEEE Trans. Green Commun. Netw.,} vol. 1, no. 4, pp. 434-443, Dec. 2017.

\end {thebibliography}

	\numberwithin{equation}{section}
\setcounter{equation}{0}

\vspace{-0.33cm}
\appendices
\vspace{-0.33cm}

	\section{Proof of \lref{lemma1}}\label{A}
\vspace{-0.33cm}	
	By applying the global loss function to both sides of \eref{agg3} and using the L-smoothness assumption, we have:
	\begin{align} \nonumber
	&\mathbb E[F(\mathbf{\tilde{u}}(t+1))]
	\leq \mathbb E [F(\mathbf{\tilde{u}}(t))]+\mathbb E \bigg \langle \triangledown F(\mathbf{\tilde{u}}(t)),- \lambda {\mathbf G}(t) \mathbf m \bigg \rangle\\& \nonumber +\frac{L}{2} \mathbb E \lVert \lambda {\mathbf G}(t)\mathbf m \lVert^2_2
	=\mathbb E [F(\mathbf{\tilde{u}}(t))]-\delta \mathbb E \bigg \langle \triangledown F(\mathbf{\tilde{u}}(t)),\mathbb E [{\mathbf G}(t)\mathbf m] \bigg \rangle \\ & \nonumber+\frac{\lambda^2L}{2} \mathbb E \bigg \lVert    \mathbf G(t) \mathbf m  - \triangledown \tilde{\mathbf F}(t) \mathbf m+\triangledown \tilde{\mathbf F}(t)\mathbf m \bigg \lVert^2_2,
	\end{align}
	where $\triangledown \tilde{\mathbf F}(t)=\bigg[\triangledown F_1(t), \triangledown F_2(t), \cdots, \triangledown F_{U_{inv}}(t)\bigg]$. Since $\mathbb E [\hat{\mathbf L}(\mathbf{\tilde{w}}_{i}(t-y))\hat{\mathbf m}]=\triangledown \tilde{\mathbf L}(\mathbf{\tilde{w}}_{i}(t-y)) \hat{\mathbf m}$, we have the following
	\begin{align} \nonumber
	&\mathbb E[F(\mathbf{\tilde{u}}(t+1))]=\mathbb E [F(\mathbf{\tilde{u}}(t))]-\lambda \mathbb E \bigg \langle \triangledown F(\mathbf{\tilde{u}}(t)),\triangledown \tilde{\mathbf F}(t) \mathbf m \bigg \rangle \\& \nonumber + \frac{\lambda^2L}{2} \mathbb E \bigg \lVert  {\mathbf G}(t)\mathbf m - \triangledown \tilde{\mathbf F}(t)\mathbf m \bigg\lVert^2_2 + \frac{\lambda^2L}{2} \lVert \triangledown \tilde{\mathbf F}(t)\mathbf m  \lVert^2_2,
	\end{align}
	where $\frac{\lambda^2L}{2} \mathbb E \bigg \lVert  {\mathbf G}(t)\mathbf m - \triangledown \tilde{\mathbf F}(t)\mathbf m+\triangledown \tilde{\mathbf F}(t)\mathbf m \bigg \lVert^2_2=\frac{\lambda^2L}{2} \mathbb E \bigg \lVert   {\mathbf G}(t) \mathbf m - \triangledown \tilde{\mathbf F}(t) \mathbf m \bigg \lVert^2_2 + \frac{\lambda^2L}{2} \mathbb E  \lVert \triangledown \tilde{\mathbf F}(t) \mathbf m  \lVert^2_2$. This is becasue $\mathbb E [\mathbf G(t) \mathbf m]=\triangledown \tilde{\mathbf F}(t)  \mathbf m$, thus the cross-terms of $  \mathbf G(t) \mathbf m$ and $\triangledown \tilde{\mathbf F}(t) \mathbf m$ are zero. Thus, we have 
	\begin{align}  \nonumber
	&\mathbb E[F(\mathbf{\tilde{u}}(t+1))]=\mathbb E [F(\mathbf{\tilde{w}}(t))]- \delta \mathbb E \bigg \langle \triangledown F(\mathbf{\tilde{u}}(t)),\\& \nonumber \sum_{u=1}^{N_{inv}} m_u \triangledown F_u  (\mathbf{\tilde{u}}_{u}(t) ) \bigg \rangle + \frac{\lambda^2 L}{2} \mathbb E \bigg \lVert \sum_{u=1}^{U_{inv}}  m_u \bigg ( g (\mathbf{\tilde{u}}_{u}(t))\\& \nonumber - \triangledown F  (\mathbf{\tilde{w}}_{u}(t) )\bigg) \bigg \lVert^2_2 +\frac{\lambda^2L}{2} \lVert \triangledown \tilde{\mathbf F}(t) \mathbf m   \lVert^2_2 \\& \nonumber 
	= \mathbb E [F(\mathbf{\tilde{u}}(t))]- \lambda \mathbb E \bigg \langle \triangledown F(\mathbf{\tilde{u}}(t)),\sum_{u=1}^{U_{inv}} m_u \triangledown F_i (\mathbf{\tilde{w}}_{u}(t)) \bigg \rangle \\& \nonumber +\frac{\lambda^2 L}{2} \sum_{u=1}^{U_{inv}} m^2_u \mathbb E \bigg \lVert   g (\mathbf{\tilde{w}}_{u}(t) )- \triangledown F_u  (\mathbf{\tilde{w}}_{u}(t) )\bigg \lVert^2_2+\frac{\lambda^2L}{2} \lVert \triangledown \tilde{\mathbf F}(t) \mathbf m  \lVert^2_2.
	\end{align}	
 
	From Assumption 3,  we have $\mathbb E \bigg \lVert   g (\mathbf{\tilde{w}}_{u}(t) )- \triangledown F_u  (\mathbf{\tilde{w}}_{u}(t) )\bigg \lVert^2_2 \leq \sigma^2$. Also, $\mathbf s^T \mathbf k= \frac{1}{2} \lVert  \mathbf s\lVert^2_2 + \frac{1}{2} \lVert  \mathbf k\lVert^2_2 -\frac{1}{2} \lVert  \mathbf s- \mathbf k \lVert^2_2$. Thus, we have $ \lambda \mathbb E \bigg \langle \triangledown F(\mathbf{\tilde{u}}(t)),\sum_{u=1}^{U_{inv}} m_u \triangledown F_u  (\mathbf{\tilde{w}}_{u}(t)) \bigg \rangle = \frac{\lambda}{2} \lVert  \triangledown F(\mathbf{\tilde{u}}(t)) \lVert^2_2 \\+\frac{\lambda}{2} \mathbb E  \lVert \sum_{u=1}^{U_{inv}} m_u \triangledown F  (\mathbf{\tilde{w}}_{u}(t) ) \lVert^2_2- \frac{\lambda}{2} \mathbb E \bigg \lVert  \triangledown F(\mathbf{\tilde{u}}(t))- \sum_{u=1}^{U_{inv}} m_u \triangledown F  (\mathbf{\tilde{w}}_{u}(t) ) \bigg \lVert^2_2$. With such a constraint, we can have   			
	\begin{align} \nonumber
	&\mathbb E[F(\mathbf{\tilde{u}}(t+1))] \leq \mathbb E [F(\mathbf{\tilde{u}}(t))]- \bigg (\frac{\lambda}{2} \lVert  \triangledown F(\mathbf{\tilde{u}}(t)) \lVert^2_2 \\& \nonumber+\frac{\lambda}{2} \mathbb E \bigg \lVert \sum_{u=1}^{U_{inv}} m_u \triangledown F  (\mathbf{\tilde{u}}_{u}(t) ) \bigg \lVert^2_2  -\frac{\lambda}{2} \mathbb E \bigg \lVert  \triangledown F(\mathbf{\tilde{u}}(t))  \\& \nonumber - \sum_{u=1}^{U_{inv}} m_u \triangledown F  (\mathbf{\tilde{w}}_{i}(t) ) \bigg \lVert^2_2 \bigg )+ \frac{\lambda^2 L}{2}\sum_{u=1}^{U_{inv}} m^2_{u} \sigma^2+\frac{\lambda^2L}{2}  \lVert \triangledown \tilde{\mathbf F}(t)\mathbf m   \lVert^2_2 \\& \nonumber 
	\leq \mathbb E [F(\mathbf{\tilde{u}}(t))]- \bigg (\frac{\lambda}{2} \lVert  \triangledown F(\mathbf{\tilde{u}}(t)) \lVert^2_2 +\frac{\lambda}{2} \mathbb E \bigg \lVert \sum_{u=1}^{U_{inv}} m_u \triangledown F  (\mathbf{\tilde{w}}_{u}(t)) \bigg \lVert^2_2 \\& \nonumber -\frac{\lambda}{2} \mathbb E \bigg \lVert  \sum_{u=1}^{U_{inv}} m_u \bigg (\triangledown F(\mathbf{\tilde{u}}(t))  -   \triangledown F  (\mathbf{\tilde{w}}_{u}(t))  \bigg ) \bigg \lVert^2_2 \bigg )\\ & \nonumber +\frac{\lambda^2 L}{2}\sum_{u=1}^{U_{inv}} m^2_{u} \sigma^2+\frac{\lambda^2L}{2} \mathbb E \lVert \triangledown \sum_{u=1}^{U_{inv}} m_u \triangledown F_u(\mathbf{\tilde{w}}_u(t)) \lVert^2_2.
	\end{align}

	We denote $\tilde{Q}=\mathbb E \lVert \triangledown \sum_{i=1}^{U_{inv}} m_u \triangledown F_u(\mathbf{\tilde{w}}_u(t)) \lVert^2_2$,  we have the following 
	\begin{align} \nonumber
	&\mathbb E[F(\mathbf{\tilde{u}}(t+1))] \leq  \mathbb E [F(\mathbf{\tilde{u}}(t))]- \frac{\lambda}{2} \lVert  \triangledown F(\mathbf{\tilde{u}}(t)) \lVert^2_2 -\bigg ( \frac{\lambda}{2} -\frac{\lambda^2 L}{2}\bigg ) \tilde{Q} \\& \nonumber +\frac{\lambda}{2} \mathbb E \bigg \lVert  \sum_{u=1}^{U_{inv}} m_u \bigg (\triangledown F(\mathbf{\tilde{u}}(t))  -   \triangledown F  (\mathbf{\tilde{w}}_u(t))  \bigg ) \bigg \lVert^2_2+\frac{\lambda^2 L}{2}\sum_{u=1}^{U_{inv}} m^2_{u} \sigma^2 \\& \nonumber 
	=  \mathbb E [F(\mathbf{\tilde{u}}(t))]- \frac{\lambda}{2} \lVert  \triangledown F(\mathbf{\tilde{u}}(t)) \lVert^2_2 -\bigg ( \frac{\lambda}{2} -\frac{\lambda^2 L}{2}\bigg ) \tilde{Q} \\& \nonumber +\frac{\lambda}{2} \mathbb E \bigg \lVert  \sum_{u=1}^{U_{inv}} m_u \mathbb E \bigg \lVert \triangledown F(\mathbf{\tilde{u}}(t))  -   \triangledown F  (\mathbf{\tilde{w}}_i(t))    \bigg \lVert+\frac{\lambda^2 L}{2}\sum_{u=1}^{U_{inv}} m^2_{u} \sigma^2
	\\& \nonumber 
	\leq  \mathbb E [F(\mathbf{\tilde{u}}(t))]- \frac{\lambda}{2} \lVert  \triangledown F(\mathbf{\tilde{u}}(t)) \lVert^2_2 -\bigg ( \frac{\lambda}{2} -\frac{\lambda^2 L}{2}\bigg ) \tilde{Q} \\& \nonumber +\frac{\lambda}{2} \mathbb E \bigg \lVert  \sum_{u=1}^{U_{inv}} m_u \mathbb E \bigg \lVert  \mathbf{\tilde{u}}(t)-\mathbf{\tilde{w}}_u(t)  \bigg \lVert+\frac{\lambda^2 L}{2}\sum_{u=1}^{U_{inv}} m^2_{u} \sigma^2.
	\end{align}	
	The last inequality holds because of the L-
	smoothness assumption of the local loss function.	We conclude the proof by moving $\mathbb E [F(\mathbf{\tilde{u}}(t))]$ to the left hand side (LHS), thus we will have
	\begin{align} \nonumber
	&\mathbb{E}[F(\mathbf{\tilde{u}}(t+1))]-\mathbb{E}[F(\mathbf{\tilde{u}}(t))]  \nonumber \leq \frac{-\lambda}{2} \mathbb E  \lVert \triangledown F(\mathbf{\tilde{u}}(t)) \lVert^2_2\\ &+\frac{\lambda^2 L}{2} \sum_{u=1}^{{U_{inv}}} m_u\sigma^2  -\frac{\lambda}{2}(1-\lambda L)\tilde{Q}+\frac{\lambda L^2}{2}\mathbb E \bigg \lVert \mathbf{\tilde{w}}(t)(\mathbf I-\mathbf M) \bigg \lVert^2_\mathbf M.
	\end{align}

	
 \ignore{	\section{Proof of \lref{lemma1}}\label{A}
 	
 	By applying the global loss function to both sides of \eref{agg3} and using the L-smoothness assumption, we have:
 	\begin{align} \nonumber
 	&\mathbb E[F(\mathbf{\tilde{u}}(t+1))]
 	\leq \mathbb E [F(\mathbf{\tilde{u}}(t))]+\mathbb E \bigg \langle \triangledown F(\mathbf{\tilde{u}}(t)),- \lambda {\mathbf G}(t) \mathbf m \bigg \rangle\\& \nonumber +\frac{L}{2} \mathbb E \lVert \lambda {\mathbf G}(t)\mathbf m \lVert^2_2
 	=\mathbb E [F(\mathbf{\tilde{u}}(t))]-\delta \mathbb E \bigg \langle \triangledown F(\mathbf{\tilde{u}}(t)),\mathbb E [{\mathbf G}(t)\mathbf m] \bigg \rangle \\ & \nonumber+\frac{\lambda^2L}{2} \mathbb E \bigg \lVert    \mathbf G(t) \mathbf m  - \triangledown \tilde{\mathbf F}(t) \mathbf m+\triangledown \tilde{\mathbf F}(t)\mathbf m \bigg \lVert^2_2,
 	\end{align}
 	where $\triangledown \tilde{\mathbf F}(t)=\bigg[\triangledown F_1(t), \triangledown F_2(t), \cdots, \triangledown F_{U_{inv}}(t)\bigg]$. Since $\mathbb E [\hat{\mathbf L}(\mathbf{\tilde{w}}_{i}(t-y))\hat{\mathbf m}]=\triangledown \tilde{\mathbf L}(\mathbf{\tilde{w}}_{i}(t-y)) \hat{\mathbf m}$, we have the following
 	\begin{align} \nonumber
 	&\mathbb E[F(\mathbf{\tilde{u}}(t+1))]=\mathbb E [F(\mathbf{\tilde{u}}(t))]-\lambda \mathbb E \bigg \langle \triangledown F(\mathbf{\tilde{u}}(t)),\triangledown \tilde{\mathbf F}(t) \mathbf m \bigg \rangle \\& \nonumber + \frac{\lambda^2L}{2} \mathbb E \bigg \lVert  {\mathbf G}(t)\mathbf m - \triangledown \tilde{\mathbf F}(t)\mathbf m \bigg\lVert^2_2 + \frac{\lambda^2L}{2} \lVert \triangledown \tilde{\mathbf F}(t)\mathbf m  \lVert^2_2,
 	\end{align}
 	where $\frac{\lambda^2L}{2} \mathbb E \bigg \lVert  {\mathbf G}(t)\mathbf m - \triangledown \tilde{\mathbf F}(t)\mathbf m+\triangledown \tilde{\mathbf F}(t)\mathbf m \bigg \lVert^2_2=\frac{\lambda^2L}{2} \mathbb E \bigg \lVert   {\mathbf G}(t) \mathbf m - \triangledown \tilde{\mathbf F}(t) \mathbf m \bigg \lVert^2_2 + \frac{\lambda^2L}{2} \mathbb E  \lVert \triangledown \tilde{\mathbf F}(t) \mathbf m  \lVert^2_2$. This is becasue $\mathbb E [\mathbf G(t) \mathbf m]=\triangledown \tilde{\mathbf F}(t)  \mathbf m$, thus the cross-terms of $  \mathbf G(t) \mathbf m$ and $\triangledown \tilde{\mathbf F}(t) \mathbf m$ are zero. Thus, we have \eref{27eq} at the top of the next page.
 	\begin{table*}
 		\begin{normalsize} 	
 			\begin{align}  \label{27eq}
 			\mathbb E[F(\mathbf{\tilde{u}}(t+1))]&=\mathbb E [F(\mathbf{\tilde{w}}(t))]- \delta \mathbb E \bigg \langle \triangledown F(\mathbf{\tilde{u}}(t)),\sum_{u=1}^{N_{inv}} m_u \triangledown F_u  (\mathbf{\tilde{u}}_{u}(t) ) \bigg \rangle \\& \nonumber + \frac{\lambda^2 L}{2} \mathbb E \bigg \lVert \sum_{u=1}^{U_{inv}}  m_u \bigg ( g (\mathbf{\tilde{u}}_{u}(t))- \triangledown F  (\mathbf{\tilde{w}}_{u}(t) )\bigg) \bigg \lVert^2_2 +\frac{\lambda^2L}{2} \lVert \triangledown \tilde{\mathbf F}(t) \mathbf m   \lVert^2_2 \\& \nonumber 
 			= \mathbb E [F(\mathbf{\tilde{u}}(t))]- \lambda \mathbb E \bigg \langle \triangledown F(\mathbf{\tilde{u}}(t)),\sum_{u=1}^{U_{inv}} m_u \triangledown F_i (\mathbf{\tilde{w}}_{u}(t)) \bigg \rangle \\& \nonumber +\frac{\lambda^2 L}{2} \sum_{u=1}^{U_{inv}} m^2_u \mathbb E \bigg \lVert   g (\mathbf{\tilde{w}}_{u}(t) )- \triangledown F_u  (\mathbf{\tilde{w}}_{u}(t) )\bigg \lVert^2_2+\frac{\lambda^2L}{2} \lVert \triangledown \tilde{\mathbf F}(t) \mathbf m  \lVert^2_2.
 			\end{align}	
 		\end{normalsize}
 		\hrulefill
 	\end{table*}
 	From Assumption 3,  we have $\mathbb E \bigg \lVert   g (\mathbf{\tilde{w}}_{u}(t) )- \triangledown F_u  (\mathbf{\tilde{w}}_{u}(t) )\bigg \lVert^2_2 \leq \sigma^2$. Also, $\mathbf s^T \mathbf k= \frac{1}{2} \lVert  \mathbf s\lVert^2_2 + \frac{1}{2} \lVert  \mathbf k\lVert^2_2 -\frac{1}{2} \lVert  \mathbf s- \mathbf k \lVert^2_2$. Thus, we have $ \lambda \mathbb E \bigg \langle \triangledown F(\mathbf{\tilde{u}}(t)),\sum_{u=1}^{U_{inv}} m_u \triangledown F_u  (\mathbf{\tilde{w}}_{u}(t)) \bigg \rangle = \frac{\lambda}{2} \lVert  \triangledown F(\mathbf{\tilde{u}}(t)) \lVert^2_2 \\+\frac{\lambda}{2} \mathbb E  \lVert \sum_{u=1}^{U_{inv}} m_u \triangledown F  (\mathbf{\tilde{w}}_{u}(t) ) \lVert^2_2- \frac{\lambda}{2} \mathbb E \bigg \lVert  \triangledown F(\mathbf{\tilde{u}}(t))- \sum_{u=1}^{U_{inv}} m_u \triangledown F  (\mathbf{\tilde{w}}_{u}(t) ) \bigg \lVert^2_2$. With such a constraint, we can have  \eref{28eq} at the top of the next page.  
 	
 	\begin{table*}
 		\begin{normalsize} 			
 			\begin{align} \label{28eq}
 			\mathbb E[F(\mathbf{\tilde{u}}(t+1))]& \leq \mathbb E [F(\mathbf{\tilde{u}}(t))]- \bigg (\frac{\lambda}{2} \lVert  \triangledown F(\mathbf{\tilde{u}}(t)) \lVert^2_2 +\frac{\lambda}{2} \mathbb E \bigg \lVert \sum_{u=1}^{U_{inv}} m_u \triangledown F  (\mathbf{\tilde{u}}_{u}(t) ) \bigg \lVert^2_2  -\frac{\lambda}{2} \mathbb E \bigg \lVert  \triangledown F(\mathbf{\tilde{u}}(t))  \\& \nonumber - \sum_{u=1}^{U_{inv}} m_u \triangledown F  (\mathbf{\tilde{w}}_{i}(t) ) \bigg \lVert^2_2 \bigg )+ \frac{\lambda^2 L}{2}\sum_{u=1}^{U_{inv}} m^2_{u} \sigma^2+\frac{\lambda^2L}{2}  \lVert \triangledown \tilde{\mathbf F}(t)\mathbf m   \lVert^2_2 \\& \nonumber 
 			\leq \mathbb E [F(\mathbf{\tilde{u}}(t))]- \bigg (\frac{\lambda}{2} \lVert  \triangledown F(\mathbf{\tilde{u}}(t)) \lVert^2_2 +\frac{\lambda}{2} \mathbb E \bigg \lVert \sum_{u=1}^{U_{inv}} m_u \triangledown F  (\mathbf{\tilde{w}}_{u}(t)) \bigg \lVert^2_2 \\& \nonumber -\frac{\lambda}{2} \mathbb E \bigg \lVert  \sum_{u=1}^{U_{inv}} m_u \bigg (\triangledown F(\mathbf{\tilde{u}}(t))  -   \triangledown F  (\mathbf{\tilde{w}}_{u}(t))  \bigg ) \bigg \lVert^2_2 \bigg )\\ & \nonumber +\frac{\lambda^2 L}{2}\sum_{u=1}^{U_{inv}} m^2_{u} \sigma^2+\frac{\lambda^2L}{2} \mathbb E \lVert \triangledown \sum_{u=1}^{U_{inv}} m_u \triangledown F_u(\mathbf{\tilde{w}}_u(t)) \lVert^2_2.
 			\end{align}	
 		\end{normalsize}
 		\hrulefill
 	\end{table*}
 	
 	We denote $\tilde{Q}=\mathbb E \lVert \triangledown \sum_{i=1}^{U_{inv}} m_u \triangledown F_u(\mathbf{\tilde{w}}_u(t)) \lVert^2_2$, and from \eref{28eq}, we have the following 
 	\begin{align} \nonumber
 	&\mathbb E[F(\mathbf{\tilde{u}}(t+1))] \leq  \mathbb E [F(\mathbf{\tilde{u}}(t))]- \frac{\lambda}{2} \lVert  \triangledown F(\mathbf{\tilde{u}}(t)) \lVert^2_2 -\bigg ( \frac{\lambda}{2} -\frac{\lambda^2 L}{2}\bigg ) \tilde{Q} \\& \nonumber +\frac{\lambda}{2} \mathbb E \bigg \lVert  \sum_{u=1}^{U_{inv}} m_u \bigg (\triangledown F(\mathbf{\tilde{u}}(t))  -   \triangledown F  (\mathbf{\tilde{w}}_u(t))  \bigg ) \bigg \lVert^2_2+\frac{\lambda^2 L}{2}\sum_{u=1}^{U_{inv}} m^2_{u} \sigma^2 \\& \nonumber 
 	=  \mathbb E [F(\mathbf{\tilde{u}}(t))]- \frac{\lambda}{2} \lVert  \triangledown F(\mathbf{\tilde{u}}(t)) \lVert^2_2 -\bigg ( \frac{\lambda}{2} -\frac{\lambda^2 L}{2}\bigg ) \tilde{Q} \\& \nonumber +\frac{\lambda}{2} \mathbb E \bigg \lVert  \sum_{u=1}^{U_{inv}} m_u \mathbb E \bigg \lVert \triangledown F(\mathbf{\tilde{u}}(t))  -   \triangledown F  (\mathbf{\tilde{w}}_i(t))    \bigg \lVert+\frac{\lambda^2 L}{2}\sum_{u=1}^{U_{inv}} m^2_{u} \sigma^2
 	\\& \nonumber 
 	\leq  \mathbb E [F(\mathbf{\tilde{u}}(t))]- \frac{\lambda}{2} \lVert  \triangledown F(\mathbf{\tilde{u}}(t)) \lVert^2_2 -\bigg ( \frac{\lambda}{2} -\frac{\lambda^2 L}{2}\bigg ) \tilde{Q} \\& \nonumber +\frac{\lambda}{2} \mathbb E \bigg \lVert  \sum_{u=1}^{U_{inv}} m_u \mathbb E \bigg \lVert  \mathbf{\tilde{u}}(t)-\mathbf{\tilde{w}}_u(t)  \bigg \lVert+\frac{\lambda^2 L}{2}\sum_{u=1}^{U_{inv}} m^2_{u} \sigma^2.
 	\end{align}	
 	The last inequality holds because of the L-
 	smoothness assumption of the local loss function.	We conclude the proof by moving $\mathbb E [F(\mathbf{\tilde{u}}(t))]$ to the left hand side (LHS), thus we will have
 	\begin{align} \nonumber
 	\mathbb{E}[F(\mathbf{\tilde{u}}(t+1))]-&\mathbb{E}[F(\mathbf{\tilde{u}}(t))]  \nonumber \leq \frac{-\lambda}{2} \mathbb E  \lVert \triangledown F(\mathbf{\tilde{u}}(t)) \lVert^2_2+\frac{\lambda^2 L}{2} \sum_{u=1}^{{U_{inv}}} m_u\sigma^2\\ &  -\frac{\lambda}{2}(1-\lambda L)\tilde{Q}+\frac{\lambda L^2}{2}\mathbb E \bigg \lVert \mathbf{\tilde{w}}(t)(\mathbf I-\mathbf M) \bigg \lVert^2_\mathbf M.
 	\end{align}}
 	
\end{document}